\begin{document}

\newtheorem{definition}{Definition}[section]
\newtheorem{ex}{Example}[section]
\newtheorem{lemma}{Lemma}[section]
\newtheorem{theorem}{Theorem}[section]

\title{Risk-Aware Sensitive Property-Driven
Resource Management in Cloud Datacenters}

\author{
Muhamad Felemban, \textit{Member, IEEE}, Abdulrahman Almutairi, and Arif Ghafoor, \textit{Fellow, IEEE}
\IEEEcompsocitemizethanks{\IEEEcompsocthanksitem M. Felemban is with the Computer Engineering Department, the Information and Computer Science Department, and the Interdisciplinary Research Center for Intelligent Secure Systems at KFUPM. 
\IEEEcompsocthanksitem A. Ghafoor is with the School of Electrical and Computer Engineering and Purdue's Center for Education and Research in Information Assurance and Security~(CERIAS), Purdue University, West Lafayette,IN, \IEEEcompsocthanksitem A. Almutairi is with the Computer Engineering Department at the College of Computer and Information Science  in King Saud University, Riyadh, Saudi Arabia }}

\IEEEcompsoctitleabstractindextext{
\begin{abstract}
Organizations are increasingly moving towards the cloud computing paradigm, in which an on-demand access to a pool of shared configurable resources is provided. However, security challenges, which are particularly exacerbated by the multitenancy and virtualization features of cloud computing, present a major obstacle. In particular, sharing of resources among potentially untrusted tenants in access controlled cloud datacenters can result in increased risk of data leakage. To address such risk, we propose an efficient risk-aware sensitive property-driven virtual resource assignment mechanism for cloud datacenters. We have used two information-theoretic measures, i.e., KL-divergence and mutual information, to represent sensitive properties in the dataset. Based on the vulnerabilities of cloud architecture and the sensitive property profile, we have formulated the problem as a cost-drive optimization problem. The problem is shown to be NP-complete. Accordingly, we have proposed two heuristics and presented simulation based performance results for cloud datacenters with multiple sensitivity.
\end{abstract}
\begin{IEEEkeywords}
Cloud services, access control, risk assessment, vulnerability
\end{IEEEkeywords}}

\maketitle

\section{Introduction}\label{sec:introduction}
\IEEEPARstart{C}{loud} computing provides access to a pool of shared configurable resources such as datacenters, software, and infrastructure. Organizations utilize cloud computing to save the cost of acquiring and maintaining computing resources on-premise. The main feature of cloud computing is the ability to dynamically assign and release physical and virtual resources to customers based on their demand. However, security challenges present a major obstacle towards a complete shift to cloud computing paradigm \cite{DBLP:journals/cit/HuQLGTMBH11}. Cloud Security Alliance (CSA)\footnote{https://downloads.cloudsecurityalliance.org/assets/research/top-threats/Treacherous-12\_Cloud-Computing\_Top-Threats.pdf} identifies the top security threat in cloud computing as the breach of confidential data stored on the cloud. Such breach can be resulted due to human error, poor security practice, or software vulnerabilities. A recent example of such data breach is the leakage of US voter data, which was stored on a data warehouse service hosted by Amazon S3. The breach was caused by a misconfiguration of the database leaving the data exposed to the public \cite{upguard2017}. 

Data sharing among authorized users remains the main security concern among cloud tenants \cite{takabi2010security}. To mitigate the risk of data leakage, encryption-based techniques are used to preserve the confidentiality of the data stored on the cloud \cite{tang2016ensuring}. Recently, encryption primitive and homomorphic encryption techniques have been proposed to enable secure data search and computation on encrypted data. However, such techniques are computationally intensive and support limited operations. On the other hand, non-encryption based approaches to reduce the risk of data leakage by implementing resource isolation mechanisms, e.g., trusted virtual domain \cite{catuogno2010trusted}, secure hypervisor \cite{steinberg2010nova}, and Chinese wall policies \cite{berger2009security}. However, achieving resource isolation reduces resource utilization. 

In datacenter applications, the confidentiality of the shared data is often preserved using authorization mechanisms to manage data access \cite{alcaraz10}. In essence, cloud providers controls data operations using Role-Based Access Control (RBAC) policy, in which a user can perform an action on the shared data if the user is assigned a role and the role has the authorization to perform the action. For a particular datacenter application, cloud providers allocate a number of virtual resources, i.e., Virtual Machines (VMs), that are either hosted in a single or multiple Physical Machines (PMs) such that cost of resource provisioning is minimized. A user attempting to access the datacenter using an access control role is assigned to a particular VM subject to the Service Level Agreement (SLA). From security perspective, it is imperative that the assignment of virtual resources takes into account the risk associated with the data leakage resulted from assigning roles to VMs. In \cite{almutairi2014risk,almutairi2015risk}, risk-aware assignment techniques have been proposed with the objective to minimize the data leakage. 

In general, data contains embedded properties that can be extracted using data mining and information retrieval techniques. For example, social network dataset generally contains patterns that represent relationships among users \cite{cho2011friendship}, enterprise dataset contains association and business rules \cite{verykios2004association}, and healthcare data may contain infectious disease outbreak behavior \cite{lu2010prospective}. Some embedded properties can be sensitive and need to be preserved from authorized users. Furthermore, colocating roles in a single PM can increase the risk of disclosing the sensitive properties due to the risk of resource sharing among VMs. In this paper, we propose an efficient resource allocation technique for cloud datacenter applications with the aim to mitigate the risk associated with disclosing the sensitive properties. In particular, we assume that global sensitive properties are defined over the entire dataset, and each individual role perceives local sensitive properties associated with the role's permissible partial dataset. The local sensitive properties may disclose partial information about the global sensitive properties. To evaluate the information disclosure of the sensitive properties, we use information-theoretic measures to quantify the difference between global and local sensitive properties. 

\textbf{Motivating example:} Consider a cloud datacenter hosting a location-based social network application that allows to share information about the location and time of users' check-in activities. Furthermore, the access to check-in data is controlled using a fine-grained access control policy, e.g., Context-aware RBAC (CRBAC). In CRBAC, the time and location are used to define the context at which the access to check-in entries is controlled. Figure \ref{ch6:fig:map} depicts an example of a CRBAC policy with 5 roles with each role  is represented as a rectangle in the map. The rectangle boundary of the roles defines the spatial extent of the check-in entries during the last 24 hours. Several sensitive properties associated with the check-in data can be envisioned based on the demographics \cite{dong2014inferring} or commute distance \cite{cho2011friendship}. For example, the rate at which users visit a certain type of locations, e.g., restaurants, theaters, etc, during certain time of the day, e.g., morning, afternoon, or evening, is a sensitive property. Such a property can be modeled as a statistical function, e.g., probability mass function (pmf). For each role in Figure \ref{ch6:fig:map}, a local pmf of the sensitive property is computed and used to estimate the global pmf defined over the entire dataset. Subsequently, a role can exploit the vulnerabilities in the cloud architecture and attack the dataset assigned to other roles to gain more information about the global pmf. Any gain about the global pmf is a risk that needs to be controlled. 
\begin{figure}[t!]
\centering
\includegraphics[width = 0.5\textwidth]{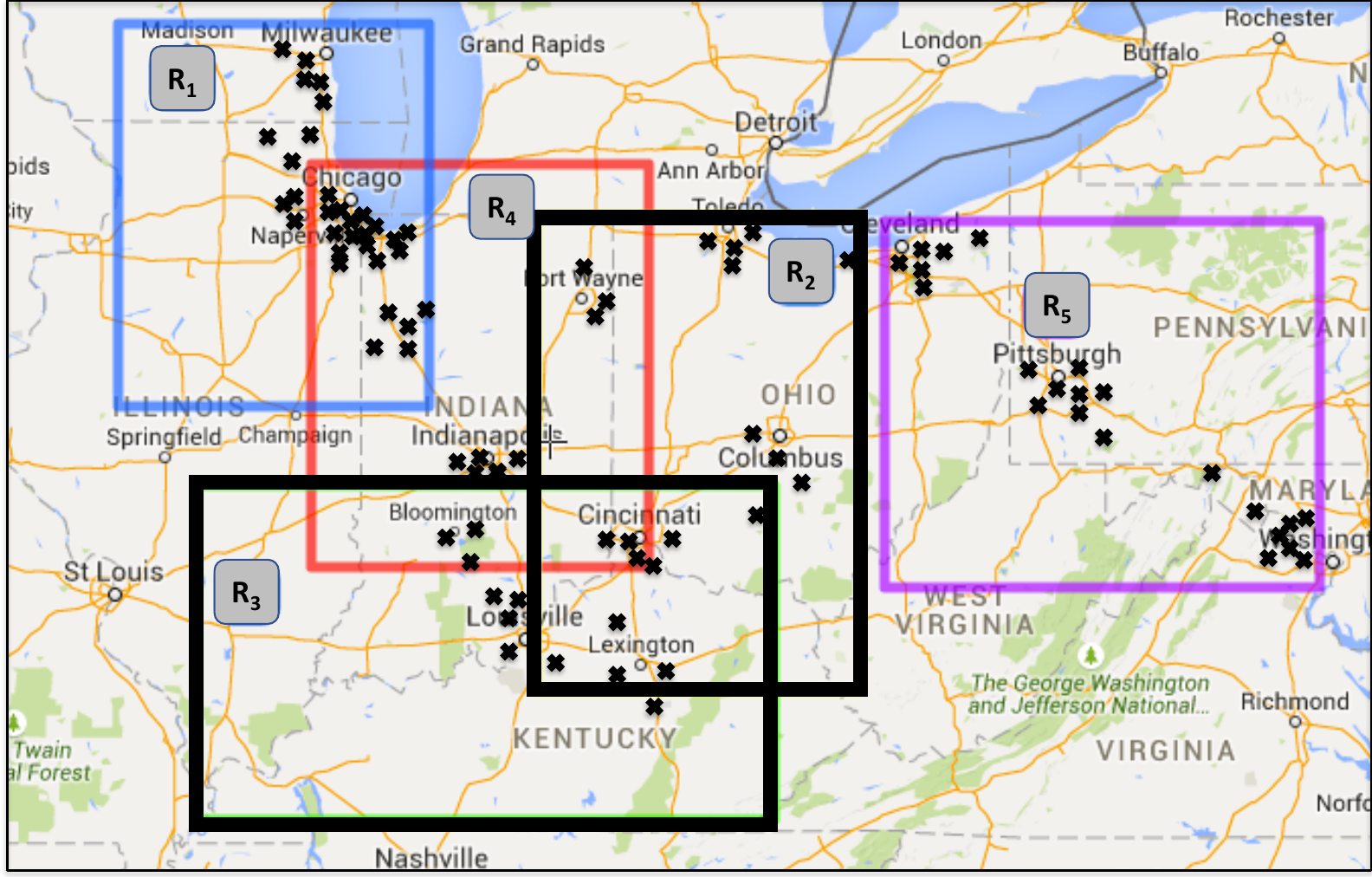}
\caption{Example of check-in data controlled by a CRBAC policy}
\label{ch6:fig:map}
\end{figure}

The contributions of this paper are as follows. First, we introduce a virtual resource management architecture that facilitates the assignment of virtual resources with objective to reduce the risk of data leakage. Accordingly, we formulate the risk-aware assignment problem that protects the global sensitive properties of the dataset. The problem is shown to be NP-complete and subsequently two heuristics are proposed. Finally, experimental evaluations of the proposed heuristics is conducted using real check-in dataset.

The remainder of this paper proceeds as follows. In Section 2, the vulnerability and authentication policy models are presented. Section 3 presents the risk-aware sensitive property-driven assignment problem. Section 4 presents the assignment heuristics. In section 5, we provide performance evaluation. Section 6 discusses related work. Finally, Section 10 outlines the conclusion.

\section{Cloud Vulnerability and Authorization Policy for Datacenters}\label{sec:profile}
In this section, we describe the risk associated with the data leakage that is caused by the assignment of access control roles to virtual resources in cloud datacenters. A statistical model of the access control policy with sensitive properties for cloud datacenters is introduced to control the assignment with the objective to reduce the risk of data leakage.

\begin{figure}[t!]
    \begin{center}
            \includegraphics[ width=0.5\textwidth]{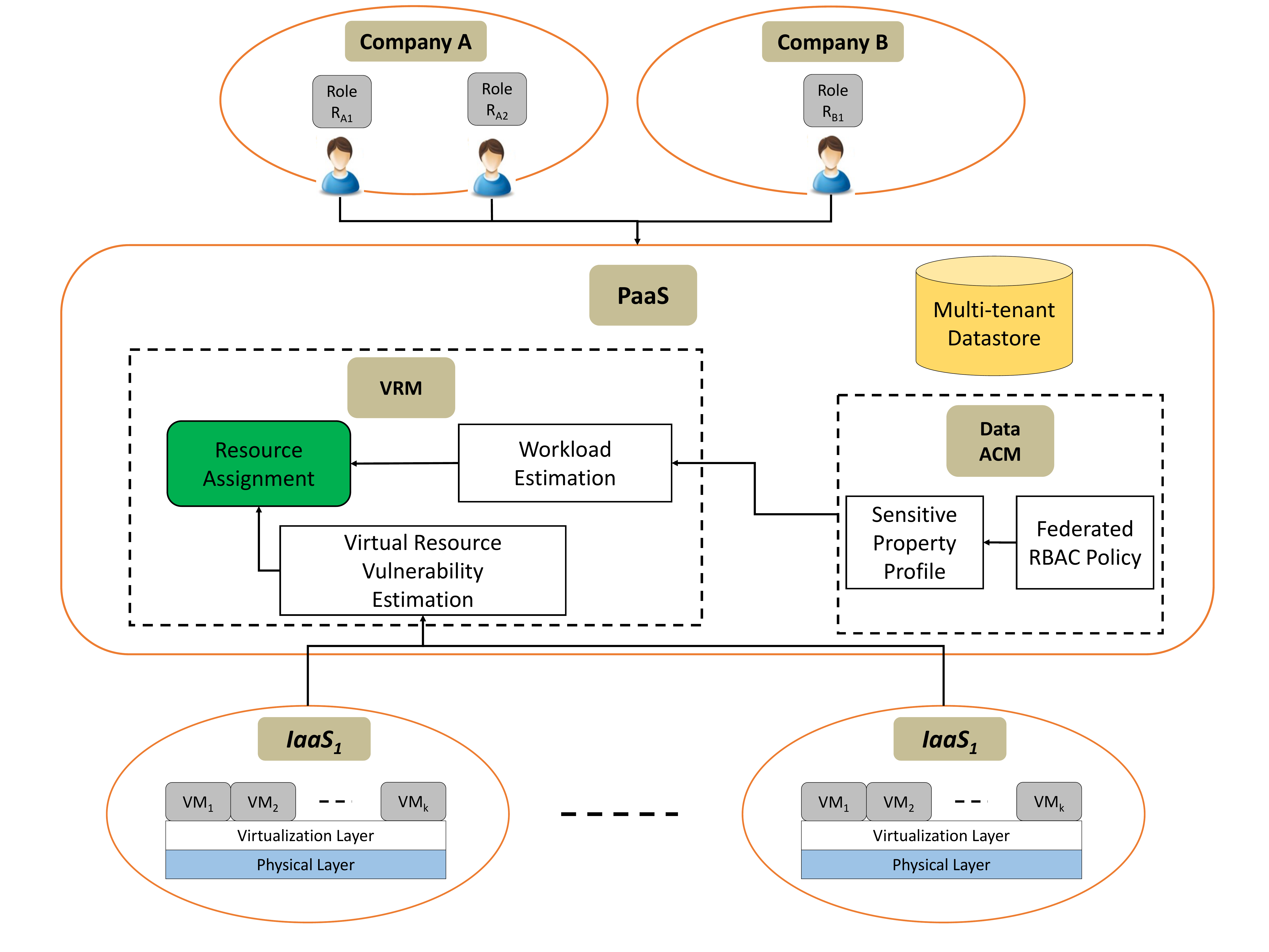}
        \caption{Virtual resource management architecture.}
        \vspace{-2.0em}
        \label{fig:arch}
       \end{center}         
\end{figure}

\begin{figure*}[t!]
    \begin{center}
        \subfloat[Example of RBAC permission assignment.]{\label{fig:rbac}
        \includegraphics[trim = 2mm 50mm 2mm 50mm, clip, width=0.4\textwidth]{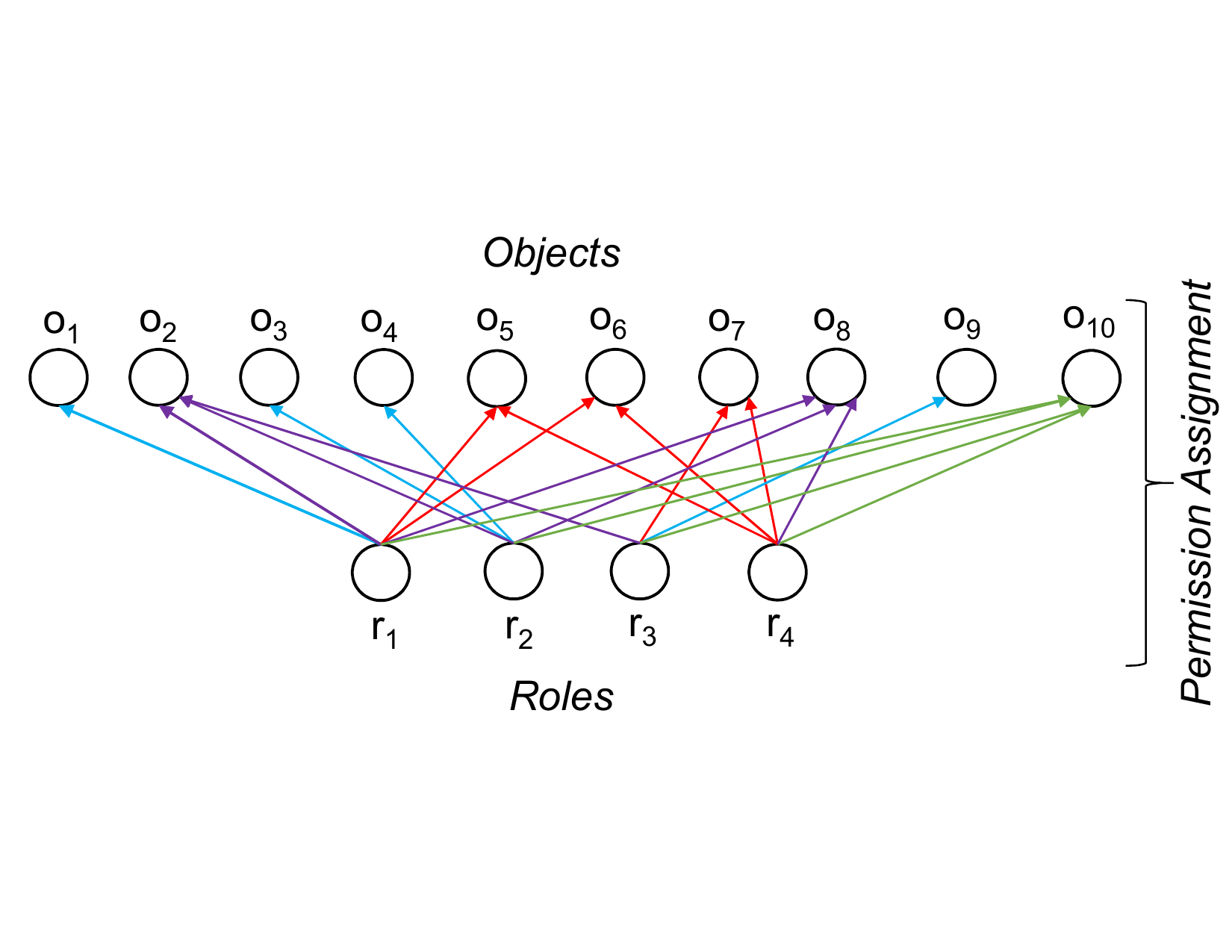}
        }
        \hskip 0.4truein
        \subfloat[Sensitive property profile of RBAC.]{\label{fig:lattice}
            \includegraphics[ clip, width=0.4\textwidth]{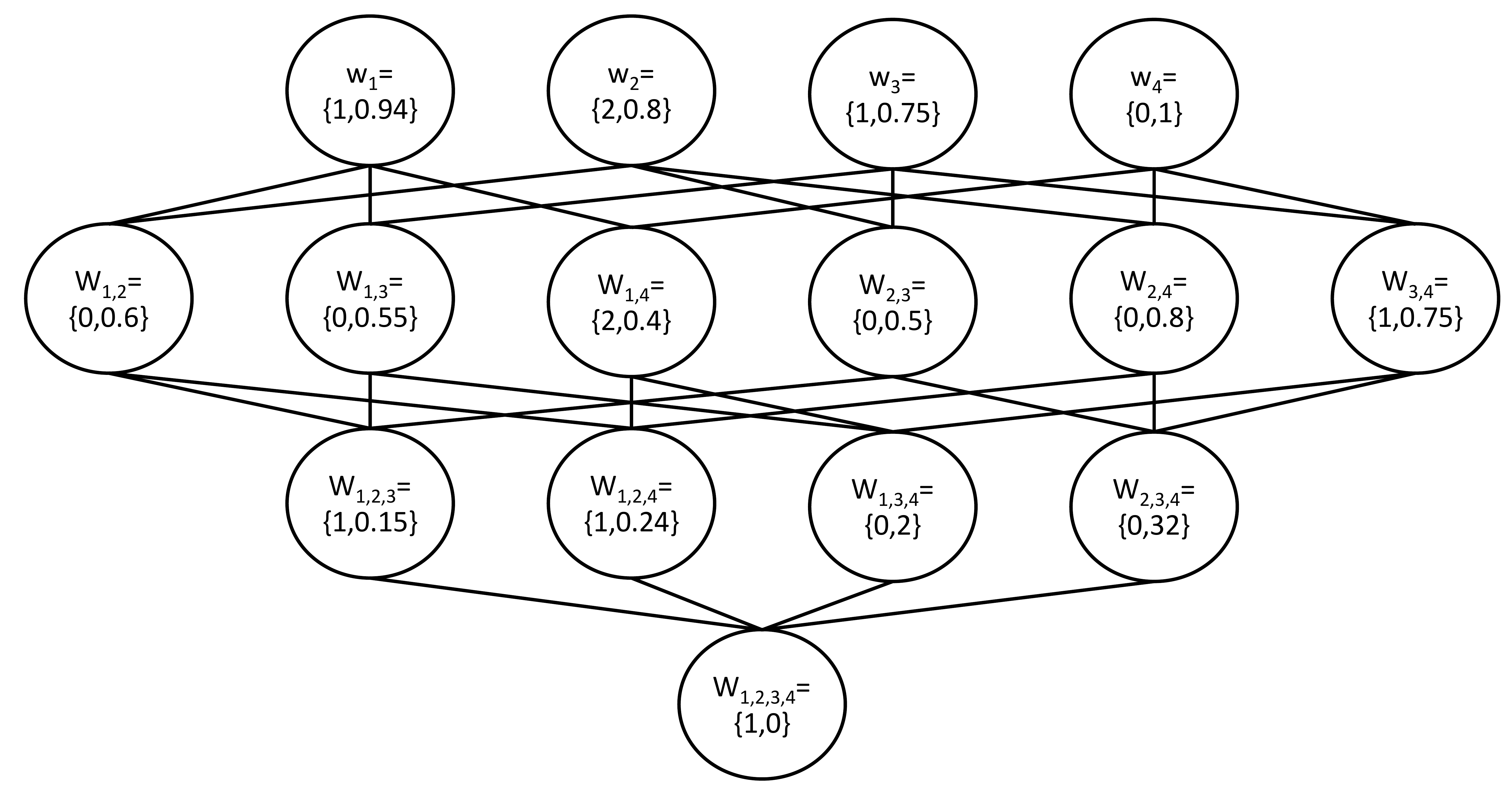}
        }
        \caption{RBAC policy representation.}
        \vspace{-2.0em}
       \end{center}         
\end{figure*}

\subsection{Cloud Vulnerability and Data Leakage Risk Model}
We take the perspective of Platform as a Service (PaaS) that allows developers to deploy Big Data applications. Examples of such applications can be found in healthcare, e-government, and enterprises \cite{kim2014big}. Virtualization technology is used to ensure isolation of applications while allowing efficient utilization of the physical resources provided by a single or multiple Infrastructure as a Service (IaaS) providers, i.e., Amazon Elastic Cloud Computing (EC2). Shared physical resources are provisioned to VMs such that the Quality Of Service (QoS) guarantees are satisfied. The QoS guarantees provided to the cloud customers is often documented in the Service Level Agreement (SLA). VMs are used to wrap the provisioned applications PaaS customers. 

In multitenant datacenters, the confidentiality of shared data is preserved using access control mechanisms, e.g., RBAC policy, that control the operations on data \cite{alcaraz10}. Under RBAC policy, a user can perform an action on the shared data if the user is assigned a role and the role has the authorization to perform the action. In \cite{almutairi2012distributed}, a distributed access control architecture that assigns virtual resources to cloud customers is proposed. The assignment is performed such that the cost of provisioning for PaaS cloud providers is minimized while satisfying the SLA for each cloud customer. Figure \ref{fig:arch} provides a system view of the cloud architecture for the virtual resource management. It consists of the datastore, Virtual Resource Manager (VRM), and Access Control Module (ACM). VRM consists of workload estimation, virtual resource vulnerability estimation, and resource assignment components. In this paper, we adopt the proposed architecture in  \cite{almutairi2012distributed} in order to develop a risk-aware sensitive property-driven virtual resource assignment framework.

Information security risk is defined by ISO 27005 as the following: \textit{"The potential that a given threat will exploit vulnerability of an asset or group of assets and thereby cause harm to the organization"}. Accordingly, the risk is qualitatively expressed as the product of the likelihood of the attack and its impact \cite{djemame2016risk}. The likelihood of an attack depends on the vulnerabilities of the system and the attacker threat, while the impact of data leakage depends on the quality or quantity of the leaked data. Formally, the data leakage risk can be formulated as the following:
\begin{equation}
Risk = Vulnerability \times Threat \times Assets 
\label{eq:risk}
\end{equation}
In particular, the vulnerability is the probability of data leakage between virtual resources. There are many software vulnerabilities that can cause data leakage across tenants, e.g., VM escape, SQL injection, and cross-site scripting \cite{pearce2013virtualization}. Such vulnerabilities have different levels of criticality which are calculated based on the Common Vulnerability Scoring System (CVSS) scores \cite{grobauer2011understanding}. The probabilities of finding each of these vulnerabilities are estimated using statistical methods such as Vulnerability Discovery Models (VDMs) \cite{ristenpart2009hey}. Consequently, the vulnerability probability is weighted based on its criticality and used to find the overall probability of data leakage. We model the vulnerabilities among virtual resources as a vulnerability matrix $\mathcal{D}$, where $d_{ij}$ is the data leakage vulnerability between the $i$-th and $j$-th resources. To capture the worst-case scenario for risk assessment, we assume that the threat is equal 1 for all roles. In other words, each cloud customer poses the same threat in accessing other customers' data objects, and vice versa. The assets are the data stored in the cloud datastore. The value of the assets in Eq. \ref{eq:risk} can be the number of leaked data items \cite{almutairi2015risk}. In this paper, however, the value of the assets is quantified as the amount of information gained about the sensitive property as a result of the attack. We use information-theoretic measure to quantify such information gain. More discussion is provided in Section \ref{sec:SP}.

\subsection{Context-aware Role-Based Access Control}
RBAC policy defines permissions on objects based on employee roles in an organization. The policy configuration is composed of a set of Users ($U$), a set of Roles ($R$), a set of Permissions ($P$), a user-to-role assignment relation ($UA$), and a role-to-permission assignment relation ($PA$). In a cloud datacenter, the RBAC policy defines the permissions to access data objects for roles assumed by the cloud users \cite{ferraiolo2001proposed}. A leading practical example of using RBAC in cloud datacenters is Microsoft Azure, which uses RBAC to authorize access to its resources through the Azure Active Directory. In addition, Azure controls its data operations using RBAC for performing create/read/update/delete operations in SQL DB \cite{Azure}. Another example is Oracle Sales Cloud that uses RBAC to secure the access to its data and functionalities \cite{oracle}. RBAC policy is modeled as a bi-partite graph as the following:
\begin{definition}
 Given an RBAC policy $ \mathcal{P}$ for a datacenter where $R$ is the set of roles and given $O$ being the set of data objects,  the role-to-permission assignment $PA$ can be represented as a directed bipartite graph $G(V,E)$, where $V=R\cup O$ s.t. $R\cap O=\phi$. The edges $e_{r_i o_j}\in E$ in $G$ represents the existence of role-to-permission assignment $(r_i\times o_j)\in PA$  in the RBAC policy $\mathcal{P}$, where $r_i \in R$ and $o_j \in O$.
\end{definition}

The out-degree of a role vertex  represents the cardinality of the role and the in-degree of a data object vertex represents the degree of sharing of that object among roles. In Figure \ref{fig:rbac}, an RBAC policy with $|R|=4$ and $|O|=10$ is represented as bipartite graph model. We can notice that the cardinality of role $r_1$ is $out$-$degree(r_1) = 6$. Also, the degree of sharing of data object $o_{10}$ is $in$-$degree(o_{10}) = 4$.

CRBAC is proposed to capture context parameters, such as location and time, in the access control policy \cite{samuel2008context}. Accordingly, the assignment of permissions to users are based on the context in order to provide a comprehensive approach for security management. For example, CRBAC is used for spatial and temporal access control for mobile devices \cite{shebaro2015context}. Spatial extensions have also been proposed to RBAC. One such extension is GEO-RBAC that defines spatial roles. A spatial role can be assumed within a specified spatial boundary. However, GEO-RBAC specifies one spatial extent for each role, which implies that each spatial location in an organization needs to have its own role. The GST-RBAC \cite{Arjmand15} is proposed to express the spatial information as spatial constraints which can be attached to any role already existed in an access control policy.

\subsection{Sensitive Property Profile (SPP) of RBAC}

In previous work, an alternative representation for RBAC, called Spectral Model (SM), is proposed \cite{almutairi2015risk}. In SM, data objects that are accessed by different roles is grouped into a set of non-overlapping partitions. Unlike the bipartite model, SM can be used to characterize the sensitivity of a datacenter using a single parameter, i.e., the degree of sharing among roles. An elaborated discussion about datacenter sensitivity will be provided in Section \ref{subsec:sens}. To model the information disclosure of each set of roles, we modify SM by augmenting the information disclosure of the sensitive property perceived by each set of roles and create the Sensitive Property Profile (SPP). Formally, the SPP is defined as follows:

\begin{definition}
Given a set of RBAC policy roles $R$ and its bipartite graph representation $G(V,E)$, let $\mathcal{P}(R)$ be the power set of $R$ excluding the null set  $\phi$. The sensitive property profile of the RBAC policy is the vector  $\mathcal{W}$, indexed by $\mathcal{P}(R)$ and lexicographically ordered. Formally, let $p \in \mathcal{P}(R)$ be a set of roles, then $w_p \in \mathcal{W}$ is defined as follows:
\begin{equation}
\begin{split}
&w_p= \{ C(w_p), c_1,\dots, c_{k} \} \nonumber
\end{split}
\end{equation}  
Where $C(w_p) = |\{o_k : o_k \in O | \forall r_i \in p \quad \exists  e_{r_i o_k} \in E \}$ is the cardinality of the set $w_p$ with $C(\mathcal{W}) = 2^n -1$, and $\{ c_1,\dots,c_k\}$ is a set of characteristics of $w_p$. There are no restrictions on the definition of the roles characteristics of the set of roles. However, it uniquely identifies the set $w_p$ from other sets.
\end{definition}  

In the following section, we provide detailed discussion about estimating the cardinality of each set $w_p$ in SPP. Then, we provide information-theoretic representations of the characteristics.

\subsection{Statistical Model for Cardinality Estimation of Roles}
\begin{algorithm}[t!]
\small
\SetNlSty{normal}{}{.}
\KwIn{Number of data objects ${|O|}$, number of roles $n$, constant$s$.}
\KwOut{spectral representation of RBAC $\mathcal{W}$.}
    Let $\mathcal{B} = \{B_1, \ldots, B_n\}$\  bucket array\; 
    \ForEach {$i = 1, \ldots, {|O|}$}{
       $\alpha$ = zipf($n$,$s$)\;
       $B_{\alpha}=B_{\alpha}+1$\;
    }
    \ForEach {$i = 1, \ldots, n$}{
     	\ForEach {$j = 1,\ldots, B_i$}{
		$\alpha$ = zipf(${n \choose i}$,$s$)\;
		map $\alpha$ to random partition in level $i$ call it $\hat{p}$\;
		$w_{\hat{p}}$ = $w_{\hat{p}}$ +1\;
	}
        add $w_{\hat{p}}$  to $\mathcal{W}$
    }
    return $\mathcal{W}$
   
\caption{Workload generation algorithm}

\label{alg:workload}
\end{algorithm}

In a datacenter with RBAC policy, specifying the exact cardinality of $w_p$ is a computationally intensive problem. One practical approach is to use a cardinality estimation technique \cite{zhang2009psalm}. In this paper, however, we assume that the data objects in a datacenter are accessed following a Zipfian distribution \cite{cooper2010benchmarking}. According to this distribution, the number of objects shared by roles exemplifies a power-law behavior, in which few objects are shared by a large number of roles while most of the objects are shared among a smaller number of roles; hence it can provide a heterogeneous workload for RBAC. The Zipfian distribution is given as follows:
\begin{equation}
\label{eq:zipf}
\begin{split}
&z(\alpha;s,N) = \frac{\alpha^{-s}}{\sum_{i=1}^N i^{-s}} \\
\end{split}
\end{equation}
where $N$ is the number of objects, $\alpha$ is the object rank, and $s \geq 1$ is the sensitivity parameter. If the sensitivity parameter $s=1$, then the probability that data object is assigned to a single role, i.e., with rank  $\alpha= 1$, is double the probability of that data object is assigned to two roles, i.e., with rank $\alpha =2$. As the value of $s$ increases, the number of data objects assigned to single roles becomes larger. 

The Zipfian distribution is used to generate heterogeneous RBAC-based workload in two steps as proposed in Algorithm \ref{alg:workload}.  In the first step, data objects are classified into $n$ buckets where each bucket represents the number of total data objects assigned to a lattice level. For example, data objects in bucket 1 are exclusively  accessed by only one role.  On the other hand, data objects in bucket $n$ are shared by all roles.  The number of data objects in each bucket follow Zipfian distribution. In the second step, we assign data objects of bucket  $i$ to randomly selected partitions at level $i$ of the lattice. Note, the number of partitions at level $i$ is $n \choose i$.

\subsubsection{Modeling Sensitivity of Cloud Datacenter}
\label{subsec:sens}
Based on the statistical property of the cardinality of roles in RBAC policy, we propose the notion of \textit{sensitivity} to classify cloud datacenters. The classification of cloud datacenters is dependent on the level of sharing of data objects among the roles. In particular, we define the sensitivity of a datacenter as the average degree of sharing among its data objects. In case the degree of sharing on average is low, we term the datacenter to have a \textit{high sensitivity}. On the contrary, if extensive sharing of data objects among roles is present, we term this datacenter to have a \textit{low sensitivity}. The \textit{medium sensitivity} class falls in the  middle. It can be noticed that the sharing of data objects and the classification of a datacenter can be modeled using the scalar parameter $s$ of the Zipfian density function shown in Equation \ref{eq:zipf}. As shown in Figure \ref{fig:lattice_graph}, for a smaller value of $s$, more data objects are uniformly distributed in the $\mathcal{W}$ vector of the spectral model of RBAC. In the following example, we illustrate how Zipfian parameter $s$ can be used to classify the sensitivity of datacenter. 

\begin{figure}[t!]
  \centering 
	\includegraphics[trim = 0mm 0mm 0mm 0mm, clip, width=0.5\textwidth]{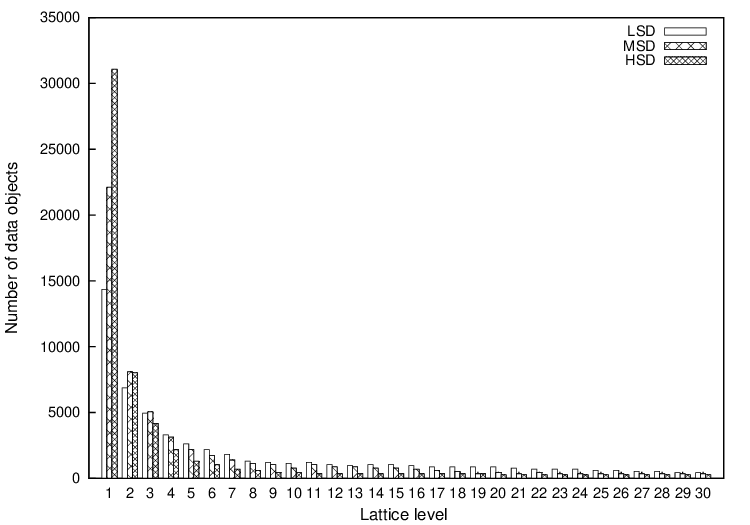}
  \caption{A statistical characterization of sensitivity of cloud datacenters}
  \vspace{-1.0em}
  \label{fig:lattice_graph}
\end{figure}


\begin{ex}
\label{ex:sensitivity}
For datacenter with $0.5 \times 10^6$ data objects, suppose we have three RBAC policies $(P_1,P_2,P_3)$ each with $n=30$ roles. Figure \ref{fig:lattice_graph} shows a histogram of object across the spectral lattice. Depending on the Zipfian distribution, the three classes of datacenters, namely; High Sensitive Datacenter (HSD), Medium Sensitive Datacenter (MSD), and Low Sensitive Datacenter (LSD), can be identified with respect to policies $P_1$, $P_2$, and $P_3$. For example, HSD has a large value of $s$  ($s \ge 2$) since the sharing of data objects among $P_1$ roles is very small. On other hand, LSD has a small value of $s$  ($1.5 > s \ge 1$) depicting the case of extensive sharing of data objects among roles of policy $P_3$. The MSD has a value of $s$ which falls in the  middle  ($2 > s \ge1.5$ ). Note, the number of data objects at level 1 in HSD is double the number of data objects at level 1 in LSD. 
\end{ex}

\subsection{Information-Theoretic Modeling of Sensitive Properties}
\label{sec:SP}

In this section, we present information-theoretic measures to quantify the information gain of the data leakage. The measures are used to quantify the assets in Eq. \ref{eq:risk}. The VRM component of Figure \ref{fig:arch} uses SPP to assign the RBAC policy roles to virtual resources. In essence, a function $f(.)$ that quantifies the disclosure of the global sensitive property is used along with the cardinality to characterize the set of roles. The function $f: \mathcal{P}(R) \rightarrow \mathbb{R}$ is a quantitative measure of the relative difference between the global sensitive property and local sensitive property obtained by accessing data according to the role set $A \in \mathcal{P}(R)$. In this paper, we consider two information-theoretic measures to model sensitive properties; Kullback-Leibler divergence ($KLD$) and Mutual Information ($MI$)  \cite{cover2012elements}. An important measure in information-theory is Entropy, which is the measure of \textit{uncertainty} in probability distributions. Let X be a random variable with probability distribution $P(X)$. The entropy $H(X)$ is defined as the following:
\begin{equation*}
H(X) = - \underset{x \in X}{\sum} p(x) \ \text{log} \ p(x)
\end{equation*}
 
\subsubsection{$KLD$-based Representation of Sensitive Property}

Given two discrete pmfs, $P(X)$ and $Q(X)$, $KLD$ is a non-symmetric measure of the difference between these two pms. Intuitively, $KLD$ measures the amount of information loss when $Q$ is used to approximate $P$. Formally, $KLD$ between two pmfs as the following:
\begin{equation*}
D(P||Q) = \sum_{x \in X} P(x) \  \text{log} \ \frac{P(x)}{Q(x)}
\end{equation*}


Using the definition of $KLD$, we define $KLD$-based property as the following:
\begin{equation}
f(A) = D(P_A || P_G)
\label{eq:f_div}
\end{equation}
Where $P_A$ is the joint probability distribution defined over data accessed by the set of roles $A$ and $P_G$ is the joint probability distribution over the entire dataset.

\begin{ex}
In Figure \ref{fig:lattice}, the general spectral model of the RBAC policy using $KLD$-based property is shown. It can be noticed that $w_{\{1,4\}} =\{ 2,0.4\}$, i.e., the cardinality of $w_{\{1,4\}}$ is 2 because $o_5$ and $o_{6}$ are accessed by both $r_1$ and $r_4$, while $f(w_{\{1,4\}})=0.4$ indicates that the probability distribution created using tuples $o_5$ and $o_6$ has a 0.4 KL divergence from the probability distribution created using $O$.  
\end{ex}


\subsubsection{$MI$-based Representation of Sensitive Property}
$MI$ is a measure that quantifies the dependency between two or more random variables. Intuitively, it measures the reduction of entropy in one random variable by observing the other. Formally, mutual information is the divergence between the joint pmf of two random variables and the product of their respective pmfs as defined below:
\begin{eqnarray*}\label{eq:mi}
MI(X;Y) &= &D(P(X,Y)||P(X)P(Y))\\
&=& \sum_{x \in X}\sum_{y \in Y} p(x,y) \text{log}\frac{p(x,y)}{p(x)p(y)}\\
& = & H(X) - H(X|Y)\\	
\end{eqnarray*}
It can be shown that the $MI$ approaches zero when the two random variables are independent, i.e., $H(X|Y) = H(X)$. On the other hand, strong dependency between $X$ and $Y$ results in a higher value of $MI$. 

Using the definition of $MI$, we define $MI$-based property as the following:
\begin{equation}
f(A) = |MI_A(X;Y) - MI_G(X;Y)|
\label{eq:f_mi}
\end{equation}
Where $MI_A(X,Y)$ is the mutual information between $X$ and $Y$ defined over data accessed by the set of roles $A$, and $MI_G(X,Y)$ is the mutual information between $X$ and $Y$ defined over the entire data set.

\section{Problem Definition and Formulation}\label{sec:assignment}
Given the proposed GSM associated with RBAC policy, and the virtual resource vulnerability model related to the cloud \cite{almutairi2014risk}, the Risk-aware Sensitive Property-driven Assignment Problem (RSPAP) is defined with the objective to minimize the risk (Eq. \ref{eq:risk}) of assigning virtual resources to access control roles as the following:

\begin{definition}\label{def:RSPAP}
(RSPAP). Given a suite of $m$ virtual resources, $v_1, \ldots, v_m$, an RBAC policy and the associated SPP , and the vulnerability matrix $\mathcal{D}$, the risk-aware sensitive property-driven assignment problem is to assign access control roles to virtual resources such that the total risk of sensitive property disclosure is minimized. Formally, let $Q_i = \{ A\in \mathcal{P}(R) | r_i \in A\}$ be the set of sets in $\mathcal{P}(R)$ that contains $r_i$ and $I$ be the assignment set of roles to VMs, where function $I(r_i)=v_j$ denoting the assignment of role $r_i$ to $v_j$. We formulate the $RSPAP$ optimization problem as the following:
\begin{equation}\label{eq:cost}
\begin{split}
\underset{I}{\text{Minimize}} \ & Risk \\ 
= &\sum_{r_i \in R} Risk(r_i) \\
= &\sum_{r_i \in R}  \max_{A \in Q_i} g_i^A \times \prod_{\underset{r_j \neq r_i}{r_j \in A}} d_{I(r_i),I(r_j)} 
\end{split}
\end{equation}
Where $g_i^A = |f(A)-f(r_i)|$ is the disclosure risk between role $r_i$ and set $A$, and $d_{k\ell}d_{I(r_i),I(r_j)} \in \mathcal{D}$ denotes the probability of leakage between $I(r_i)=v_k$ and  $I(r_j)=v_{\ell}$. 
\end{definition} 
\begin{theorem}
RSPAP problem is NP-complete.
\end{theorem}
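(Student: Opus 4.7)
The plan is to establish NP-completeness by (i) exhibiting a polynomial-time verifier, and (ii) reducing a known NP-complete problem to RSPAP. For (i), a witness is simply an assignment $I:R\to\{v_1,\ldots,v_m\}$. To verify that the total risk does not exceed a given budget $\tau$, the verifier iterates over each role $r_i$ and each set $A\in Q_i$, computes $g_i^A\cdot\prod_{r_j\in A,\,r_j\neq r_i}d_{I(r_i),I(r_j)}$, takes the maximum over $A$, and sums over $i$; since the SPP $\mathcal{W}$ is itself part of the input (its size is explicitly $2^{|R|}-1$), this evaluation runs in time polynomial in the input length.

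For (ii) I would reduce from Graph $k$-Coloring, which is NP-complete for $k\ge 3$. Given an instance $(G=(V,E),k)$, build an RSPAP instance with one role per vertex, $m=k$ VMs, and vulnerability matrix $d_{jj}=1$, $d_{j\ell}=0$ for $j\neq\ell$, so that $\prod_{r_j\in A,\,r_j\neq r_i}d_{I(r_i),I(r_j)}=1$ iff every role of $A$ shares a VM with $r_i$ and $0$ otherwise. Specify $f$ by $f(\{r_i\})=0$ for all singletons, $f(\{r_i,r_j\})=1$ precisely on edges of $G$ (and $0$ on non-edges), and $f(A)=0$ whenever $|A|\ge 3$. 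Then $g_i^A=|f(A)-f(\{r_i\})|$ equals $1$ exactly on pairs $\{r_i,r_j\}$ with $(r_i,r_j)\in E$ and $0$ elsewhere. Consequently $Risk(r_i)=1$ iff some neighbor of $r_i$ in $G$ is co-located with $r_i$ under $I$, and asking whether the total risk is $\le 0$ is equivalent to asking whether $G$ admits a proper $k$-coloring. Combining NP-membership with this reduction yields NP-completeness.

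The chief obstacle is the implicit exponential size of $\mathcal{W}$: writing it out entry-by-entry would make the reduction exponential in $|V|$, so one must treat the SPP as specified compactly by its nonzero $f$-values, of which the construction uses only $O(|E|)$; the entries omitted are all $0$ by default. A secondary subtlety is that, in the rest of the paper, $f$ arises from a KLD or MI computation against a joint pmf, so I would also argue (briefly) that any prescribed table of $f$-values is realizable by a suitable synthetic dataset, so the hardness established for the abstract $f$ carries over to the information-theoretic instances actually considered. Modulo these representational concerns, the reduction is a straightforward encoding of coloring constraints into the $\max$-over-subsets structure of the risk objective.
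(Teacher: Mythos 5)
Your argument is correct, but it takes a genuinely different route from the paper. The paper also passes to the decision version and argues NP membership by checking a given assignment against the bound, but its hardness argument reduces from the Traveling Salesman Problem: it encodes the intercity distances directly into the vulnerability matrix $\mathcal{D}$ and defines $f$ to be $1$ exactly on pairs $\{r_i,r_j\}$ that are adjacent in a fixed cyclic order, so that a low-risk assignment is meant to correspond to a short tour. You instead reduce from Graph $k$-Coloring, collapsing the numerical structure entirely: $d_{jj}=1$, $d_{j\ell}=0$ for $j\neq\ell$, $f$ equal to $1$ precisely on the edges of the input graph, and the threshold $c=0$, so that zero total risk is exactly a proper $m$-coloring. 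Your construction buys a cleaner and more easily verified equivalence, since it exploits only the per-role $\max$-over-subsets structure and a $0/1$ vulnerability matrix, whereas the paper's TSP encoding tries to preserve the distance values but leaves more to check (the objective sums per-role maxima of adjacent-pair terms rather than the tour length, and nothing in the construction forces the role-to-VM map to be a bijection), so your route is arguably the more rigorous of the two. Two small points to tighten: keep the input encoding consistent between the membership and hardness halves (under the compact, default-zero representation of $f$ that your reduction needs, NP membership still holds because only sets with listed nonzero $f$-values can contribute to each $\max$, but this should be said explicitly rather than appealing to an explicitly stored $\mathcal{W}$ of size $2^{|R|}-1$); and the closing claim that any prescribed table of $f$-values is realizable by a synthetic dataset is unproven and in fact delicate, but it is also unnecessary, since the paper's formulation (and its own proof) treats the SPP and $f$ as part of the problem input rather than as quantities that must arise from a KLD or MI computation.
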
 

\begin{proof}
We construct a decision  formulation of RSPAP problem as follow:\\
For a given constant $c$, is there any assignment  $I$ such that 
\begin{eqnarray*}
Risk&=&\sum_{r_i \in R}  \max_{A \in Q_i} g_{i}^{A} \times \prod_{\underset{r_j \neq r_i}{r_j \in A}} d_{I(r_i),I(r_j)}   \le c
\end{eqnarray*}
First, given $c$ and $I$ we can check if the SPRAP decision problem satisfies the bound in polynomial time. Accordingly, the decision formulation of SPRAP is NP. Next, we show that decision SPRAP problem is NP-Complete by showing that  Travel Salesman problem (TSP) can be reduced to SPRAP. We formulates TSP for a given set $U$ with a distance metric $h(a,b)$ where $a,b \in U$, is there such an ordering of elements of $U$:($u_1,u_2,...,u_m$) such that $\sum_{i=1}^n h(u_i,u_{i+1})+h(u_n,u_1) \leq c$.

Let  $f$ be the sensitive property function for  RBAC policy with $n$ roles where $R= \lbrace r_1,r_2, \dots , r_n \rbrace$.  We define $f$ as follows:
\[ f(A) = \left  \{
  \begin{array}{l l}
    1 & \quad  \text{if} \ A = \lbrace r_i, r_j \rbrace\ \text{and} \ |i-j| \ mod \ n = 1\\
    0 & \quad \text{Otherwise}
\end{array} \right.\]

Let  $V = U$  and $d_{i,i} = h(a,b)$ where $v_i= a \ \text{and} \ v_j = b$ then assume the assignment of role $r_i$ to VM $v_j$ as if the salesman visits city $u_j$. Therefore, if we find an optimal order of visits in TSP, we can find the optimal assignment of $R$  to $V$ in RSPAP.
\end{proof} 

%
%

\section{Proposed Heuristics for RSPAP}\label{sec:heuristics}
In this section, we propose two heuristics for solving RKAP which can be used by the resource assignment component in VRM in Figure \ref{fig:arch}. The first heuristic follows a top-down clustering based approach. In each step, the cluster of roles is divided based on the risk associated with the disclosure of the sensitive property. The second heuristic is a neighbor-based heuristic, which uses a pairwise property disclosure measure for role scheduling. This measure is computed based on the inference function $f$. Each role is assigned to the best available virtual resource with respect to the probability of leakage.

\subsection{Top-Down Heuristic (TDH)}
In  this heuristic, a top down clustering approach is used. Initially, all the roles are assumed to be in one cluster. We begin with division of this root cluster and split it into two clusters such that total measure of property disclosure $g_i^A $ of both clusters is minimum. The resulting clusters are sorted based on their total $g_i^A$. TDH splits a cluster with the largest total $g_i^A$ further by moving roles with high value of $g_i^A$ from current cluster to the new cluster such that the total $g_i^A$ of both clusters is minimum. TDH repeats the cluster  splitting  step until it generates up to $m$ clusters that equals  the number of VMs. Subsequently, the cluster with the largest total $g_i^A$ is assigned to the VM with the minimum probability of leakage. After this initial assignment, TDH iterates over all the roles and changes the assignment of a role if it results in reduction of the total risk. The algorithm stops when any futher change in the assignment does not reduce the total  risk. 

 Algorithm \ref{ch6:alg:tdh}, formally represents the TDH. In Lines 1-2,  the initial cluster list has one cluster which is the set of all roles. In Lines 3-13,  TDH iterates the outer loops $m$ times and in each iteration the first cluster in  $\mathcal{C}$ is divided  into two new clusters $C1$ and $C2$. In the inner loop Lines 7-11, TBH moves the role $r_i$ from the current cluster $C1$ to the new cluster $C2$. The condition in Line 8 guarantees that this move does not increase the total  risk.  In Lines 12-13, TDH adds the new clusters to $\mathcal{C}$ and sorts the clusters based on  their  total $g_i^A$. Then,  the VMs  are sorted based on the intra probability of leakage in Line 14. The initial assignment is generated in Lines 15-17 by iterating over all the sorted VMs and the resulting sorted clusters.  TDH in Lines 18-22 iterates over all roles to improve the assignment. The final the assignment is stored in assignment matrix $I$.

\begin{lemma}
The  complexity of TDH is $O(n^3 \times m^2 )$
\end{lemma}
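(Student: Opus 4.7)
The plan is to walk through Algorithm~\ref{ch6:alg:tdh} phase by phase, bound the cost of each phase in terms of $n$ (number of roles) and $m$ (number of VMs), and sum the pieces to obtain the claimed $O(n^3 m^2)$ bound. I would split the work into (i) the cluster-construction phase in Lines 3--13, (ii) the VM-sorting and initial assignment in Lines 14--17, and (iii) the local-search improvement in Lines 18--22.

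For (i), the outer loop executes exactly $m$ times, producing one extra cluster per iteration. Inside, the role-moving loop iterates over at most $n$ roles of the chosen cluster; the guard in Line 8 requires re-evaluating the aggregate $g_i^A$ of the two candidate sub-clusters, which I would charge at $O(n^2)$ under the standing assumption that the sensitive-property evaluations of Section~\ref{sec:SP} are $O(n)$ per cluster. Re-sorting $\mathcal{C}$ in Line 13 adds $O(m \log m)$ per outer iteration. Summed, this phase costs $O(m \cdot n \cdot n^2) + O(m^2 \log m) = O(m n^3)$, which is already within the target bound. Step (ii) contributes only $O(m^2)$ for sorting VMs by pairwise intra-leakage values read from $\mathcal{D}$ and $O(m)$ for the one-to-one initial assignment, and is therefore dominated.

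The main work, and the step that determines the final bound, is (iii). For each of the $n$ roles, the heuristic considers reassigning it to any of the $m$ VMs; for each candidate reassignment, it re-evaluates the objective of Eq.~\ref{eq:cost}. Under the clustering maintained by the heuristic, $\max_{A \in Q_i}$ effectively ranges over the at most $m$ co-located subsets rather than all of $\mathcal{P}(R)$, so each risk evaluation takes $O(n^2)$ time (an outer sum over $n$ roles with an $O(n)$ product over same-VM peers). A single improving pass therefore costs $O(n \cdot m \cdot n^2) = O(n^3 m)$, and the outer ``stop when no improvement'' loop runs at most $O(m)$ times, because each accepted swap strictly decreases the discretised objective and the total decrease is bounded. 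Multiplying yields $O(n^3 m^2)$, which dominates (i) and (ii) and gives the stated complexity.

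The main obstacle, I expect, is justifying two things together: (a) that $\max_{A \in Q_i}$ can be computed in polynomial time by the heuristic despite the fact that $Q_i$ contains exponentially many subsets of $R$, and (b) that the improvement loop terminates in $O(m)$ passes rather than in some larger polynomial. For (a) I would rely on the invariant that TDH only ever needs $g_i^A$ for sets $A$ corresponding to currently co-located roles, of which there are at most $m$; for (b) I would argue via a potential-function bound on the discretised risk so that the number of strictly-improving swaps is linear in $m$. Combining these two ingredients closes the analysis at $O(n^3 m^2)$.
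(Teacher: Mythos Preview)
Your phase decomposition and the treatment of phases (i) and (ii) match the paper almost verbatim: the paper also charges $O(m\cdot n\cdot n^2)=O(n^3 m)$ for the clustering loop by assuming an $O(n^2)$ cost for each evaluation of the property function in Line~8, and the sorting/assignment in Lines~14--17 is dominated.

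Where you diverge from the paper is in the accounting of the improvement phase (Lines~18--22). The paper treats this as a \emph{single} sweep of $n\times m$ role--VM pairs (exactly as written in the pseudocode, which has no outer ``repeat until no improvement'' loop), and charges each of these $nm$ iterations $O(n^2 m)$ to recompute the \emph{total} risk: a sum over $n$ roles, each requiring a $\max$ over the $m$ co-located clusters with an $O(n)$ inner product. This gives $(nm)\cdot(n^2 m)=O(n^3 m^2)$ directly. You instead charge only $O(n^2)$ per risk evaluation and recover the missing factor of $m$ by bounding the number of improving \emph{passes} by $O(m)$.

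That $O(m)$ pass bound is the weak link. Your sketch---``each accepted swap strictly decreases the discretised objective and the total decrease is bounded''---does not by itself yield anything linear in $m$; it only gives termination, with a bound depending on the range of the objective and the minimum per-swap decrement, neither of which is tied to $m$. The potential-function idea you mention would need a concrete potential whose total drop is $O(m)$ in units of the minimum decrement, and it is not clear such a potential exists for Eq.~\eqref{eq:cost}. The paper sidesteps this entirely by absorbing the extra $m$ into the per-iteration cost of evaluating the full risk (the $\max$ over the $m$ clusters). If you adopt that per-iteration cost of $O(n^2 m)$---which you in fact already anticipate when you write that $\max_{A\in Q_i}$ ``ranges over the at most $m$ co-located subsets''---you obtain $O(n^3 m^2)$ for a single sweep and no longer need the pass-counting argument at all.
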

\begin{proof}
In TDH we need to generate $m$ clusters. For each of the cluster, the algorithm iterates over at most $n$ roles. For each of the roles the algorithm evaluate function  $f$ to compute the total disclosure risk of a cluster as illustrated in Line 8 in Algorithm \ref{ch6:alg:tdh}. The computation of $f$ has the complexity of $n^2$. Therefore, the complexity of TDH for computing the initial assignment is $O(n^3 \times m )$. To improve the assignment the algorithm iterates $n \times m$ times until no further improvement can be achieved. Each iteration requires $n^2\times m$ steps to calculate the total risk. Accordingly the complexity of the improvement step  is $O(n^3 \times m^2 )$ which results in the overall  complexity of TDH which is $O(n^3 \times m^2 )$.
  \end{proof}

\begin{algorithm}[t!]

\small
\SetNlSty{normal}{}{.}
\KwIn{sensitive property function $f$, vulnerability matrix $\mathcal{D}$.}
\KwOut{An assignment $I$ of roles to VMs.}
    Let $C1=\{R\}$ be the initial cluster\;
    Let $\mathcal{C} = \{C1\}$ be the set of clusters\;
   \While{$|\mathcal{C}| \neq m$}{
   	Let $C1 = \mathcal{C}[0]$ be the cluster with maximum property disclosure\;
	Let $C2 = \{\}$\;
	Let $dis = f(C1)$\;
		\ForEach{$r_i \in C1$}{
			\If{$f(C1-r_i) + f(C2 \cup r_i) < dis$}{
				$C1 = C1 - r_i$ \;
				$C2 = C2 \cap r_i$ \;
				$dis = f(C1) + f(C2)$\;
			}	
		}
	$\mathcal{C} = \mathcal{C} \cup C2$\;
	Sort $\mathcal{C}$ based on disclosure risk\;	
  }
 Let $B$ be the sorted list of VMs based on $d_{q,q}$\;	
 \ForEach{$i =1, \dot,m $}{
  	\ForEach{$r \in \mathcal{C}[i]$}{
		$I(r) = B[i]$\;  	
	}
} 
\ForEach{$r_i \in R$}{
	Let $t$ be the current risk for $r_i$\;
	\ForEach{$v_q \in V$}{
		\If{risk of $r_i$ when $I(r_i) = v_q$ less than $t$}{
			 $I(r_i) = v_q$\;
		} 
	}
} 
 return $I$\;

\caption{TDH}
\label{ch6:alg:tdh}
\end{algorithm}		

\subsection{Neighbor-Based Heuristic (NBH)}
The NBH algorithm is shown as Algorithm \ref{ch6:alg:nbh}. Prior to the assignment, NBH computes disclosure weights for each pair of roles as shown on Line  2-4 of the Algorithm \ref{ch6:alg:nbh}. These weights are based on the sensitive property function $f$ discussed in Section \ref{sec:SP}.  Subsequently, NBH follows a best fit strategy  whereby it initially selects the pair of roles  $(r_i,r_j)$ with the maximum weight and assigns them to the pair of VMs $(v_l,v_q)$  that has the least probability of leakage. It then selects the role $r_k$ that has the highest disclosure weight with any of perviously assigned roles.  $r_k$ is then assigned to  the VM that has the minimum probability of leakage to an already selected VM.  This step is repeated  until each VM is assigned exactly one role. The remaining $(n-m)$  roles are assigned to VMs based on the value of intra probability of leakage. Th reason for only considering intra probability is that we assume  that the probability of leakage  across VMs is always less than probability of leakage within a VM. 

The assigned roles are stored in list $A$. The unassigned  roles are kept in the list $F$. The list $G$ maintains IDs of the unassigned VMs. In Lines 12-18 , each iteration of  the loop assigns a role from list $F$ with the maximum value in matrix $C$ to a VM in $G$. Then the algorithm updates the assignment matrix $I$ and the lists $A$, $F$, and $G$. The outer loop in Lines 19-23 iterates over all the remaining unassigned roles to find the current best assignment. To find the best assignment for a given role in $F$, the inner loop (Lines 20-21) iterates over all VMs  and identifies the  VM that results in the minimum risk for that role. At the end NBH returns the assignment matrix.

\begin{lemma}
The  complexity of NBH is $O(n^2 \times m)$
\end{lemma}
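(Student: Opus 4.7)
The plan is to walk through Algorithm \ref{ch6:alg:nbh} phase by phase, bound each phase's work, and then argue that one phase dominates at $O(n^2 m)$. I would partition the pseudocode into: (a) precomputing the pairwise disclosure matrix $C$ (Lines 2--4); (b) seeding the assignment with the best role pair and the best VM pair; (c) extending the partial assignment until all $m$ VMs are occupied (Lines 12--18); (d) placing the remaining $n-m$ roles using only intra-VM leakage; and (e) the local-improvement loop (Lines 19--23).

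For phase (a), the matrix $C$ has $\binom{n}{2}$ entries, each a constant-size argument to $f$, so it is filled in $O(n^2)$ time. Phase (b) requires one pass over $C$ together with one pass over the vulnerability matrix, costing $O(n^2 + m^2)$. Phase (c) runs at most $m$ times, and each iteration picks the next role by scanning the pairwise weights of the at-most-$n$ unplaced roles against the at-most-$m$ already-assigned roles and then selects a VM from the $m$ candidates; bounding the inner work by $O(nm)$ gives $O(nm^2)$ overall. Phase (d) places $n-m$ roles with one $O(m)$ decision each, totaling $O(nm)$.

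The dominant contribution is phase (e): for each of the $n$ roles, NBH tries each of the $m$ VMs, and every trial must recompute the resulting risk contribution for that role. Because NBH operates on the precomputed pairwise matrix $C$ rather than on the full power-set disclosure of Definition \ref{def:RSPAP}, each per-trial contribution reduces to a linear sweep over the other at-most-$n$ assigned roles, producing $O(n \cdot m \cdot n) = O(n^2 m)$. Summing the five phases and dropping lower-order terms yields the overall bound $O(n^2 m)$.

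The main obstacle I expect is precisely the justification in phase (e) that the per-trial risk evaluation is only $O(n)$. The cost function in equation (\ref{eq:cost}) is a maximum over every $A \in Q_i$ and, taken at face value, is exponential in $n$; the $O(n^2 m)$ bound therefore hinges on making explicit that NBH substitutes the pairwise weights $C$ for that maximum, so that scanning the other assigned roles suffices. Once this reduction is spelled out, the remaining phase-by-phase bookkeeping is routine arithmetic and the stated complexity follows.
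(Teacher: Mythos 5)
Your analysis is correct and takes essentially the same route as the paper: the dominant cost is the final placement loop, where each of the at most $n$ roles left in $F$ is tried against all $m$ VMs with an $O(n)$ sweep over the already-assigned roles using the precomputed pairwise matrix $C$, giving $O(n \times m \times n) = O(n^2 \times m)$, which is exactly the paper's (terser) argument, with your other phases being routine lower-order bookkeeping. One cosmetic correction: Lines 19--23 of NBH are the best-fit placement of the remaining $n-m$ roles (there is no separate local-improvement pass in NBH; that belongs to TDH), so your phases (d) and (e) describe the same piece of code, but this does not affect the stated bound.
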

\begin{proof}
The complexity of computing  $B_{i,q} $ in Line 22 of the NBH is $n \times m$. This  computational cost needs to be repeated $n$ times for each role in $F$. Therefore, the total complexity of NBH is  $O(n^2 \times m)$.

\end{proof}

\begin{algorithm}[t!]
\small
\SetNlSty{normal}{}{.}
\KwIn{sensitive property function $f$, vulnerability matrix $\mathcal{D}$.}
\KwOut{An assignment  $I$ of roles to VMs.}
   Let $C_{i,j}$ be the increase  in the measure of property disclosure due to the pair $(r_i,r_j)$ \; 
    \ForEach {$r_i \in R$}{
	\ForEach {$r_j \in R$}{
		$C_{i,j} =|f(r_i,r_j) - f(r_i)| +|f(r_i,r_j) - f(r_j)|$\;
		}
	}
    Find initial pair of vms   $(v_q,v_l)$ with minimum $d_{q,l}$\;
    Find initial pair of roles  $(r_i,r_j)$  with largest  $C_{i,j}$\;
    $I(r_i)=v_q$ and $I(r_j)=v_l$\;
     Let $A =\{ r_i,r_j\}$ be the set of assigned roles\;
     Let $F= R-\{r_i,r_j\}$ be the set of free roles\;
     Let $G= V-\{v_q,v_l\}$ be the set of free VMs\; 
     \While {G $\neq \phi$}{
     	Let $(r_i,r_j)$ be maximum $C_{i,j}$ where $r_i \in A$ and $r_j \in F$\;
	Find $(v_q,v_l)$ with minimum $d_{i,j}$ where $v_l\in F$  and  $I(r_i)=v_q$\;
	$I(r_j)=v_l$\;
	$A =A\cup \{ r_j\}$\;
	 $F =F -\{ r_j\}$\;
	 $G =G -\{ v_l\}$\;
	}
	 \ForEach {$r_i \in F$}{
	 	 \ForEach {$v_q \in V$}{
		 	\ForEach{$r_j \in A$}{
		 		Let $B_{i,q} =  C_{i,j} \times d_{q,q} $ such that $I(r_j)=v_q$\;
			}
		}
		 Let $r_i,v_q$ be the minimum  $B_{i,q}$\;
		$I(r_i)=v_q$\;
		$A =A\cup \{ r_i\}$\;
		 $F =F -\{ r_i\}$\;
	 }
  return $I$\;

\caption{NBH}
\label{ch6:alg:nbh}
\end{algorithm}

\section{Performance Evaluation}\label{sec:evaluation}
In this section, we present a detailed performance evaluation of the proposed heuristics with various values of $s$ (the sensitivity parameter of the Zipfian distribution), the number of virtual resources $m$, and the number of roles $n$. 

 \begin{table}[t!]
      \centering
       \begin{tabular}{| c | c | c | c |}
        \hline 
            x & Location type & x & Location type \\ \hline \hline
            1 & Accomodations & 9 &  Religious buildings\\ \hline
            2 & Governmental & 10 & City buildings  \\ \hline
            3 & Cemeteries & 11 & Shops  \\ \hline
            4 & Educational & 12 & Stadiums  \\ \hline
            5 & Restaurants & 13 & Recreational \\ \hline
            6 & Health centers & 14 & Transportation stops \\ \hline
            7 & Parks & 15 & Dams and towers \\ \hline
            8 & Banks  &&\\
            \hline
        \end{tabular}
        \caption{POI Location types}
        \label{tab:loc}
\end{table}
\begin{table}[t!]
      \centering
       \begin{tabular}{|c| c|}
        \hline
            y & Time slot \\ \hline \hline
            1 & 6:00 AM to 11:59 AM \\ \hline
            2 & 12:00 PM to 4:59 PM \\ \hline
            3 & 5:00 PM to 7:59 PM \\\hline
            4 & 8:00 PM to 5:59 AM\\\hline
        \end{tabular}
        \caption{Time slots}
        \label{tab:time}
\end{table}

\begin{figure*}[t]
    \begin{center}
        \subfloat[Global p.m.f]{\label{fig:global}
        \includegraphics[width=0.33\textwidth]{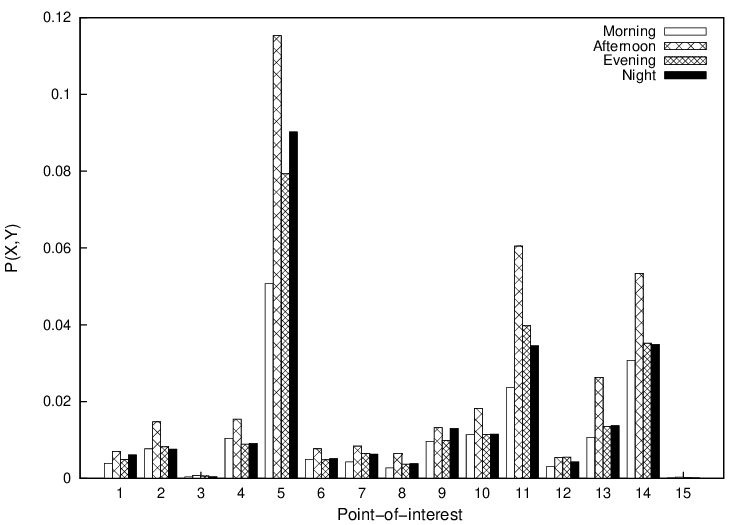}
        } 
     \subfloat[$D_1$ p.m.f]{\label{fig:role1}
        \includegraphics[width=0.33\textwidth]{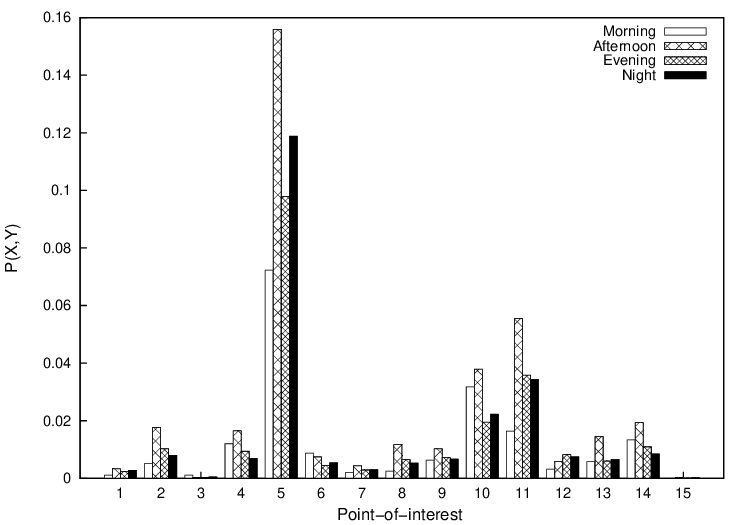}

        } 
            \subfloat[$D_2$ p.m.f]{\label{fig:role2}
        \includegraphics[width=0.33\textwidth]{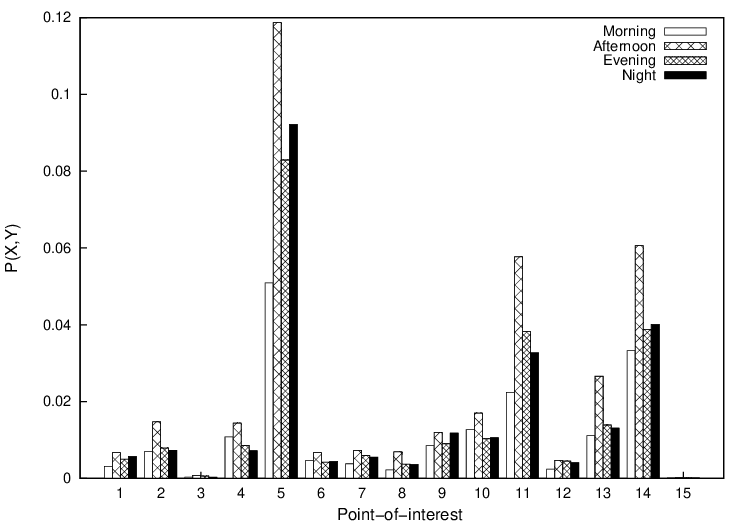}
        }
        \caption{Global and local p.m.f.}
\label{fig:pmf}        
       \end{center}         
\end{figure*}

\subsection{Check-in Dataset}
In this section we present a detailed discussion about the dataset and the sensitive properties used in evaluating the proposed heuristics. The experiments have been conducted using a  real life check-in dataset collected from \textbf{Gowalla} social networking website \footnote{https://snap.stanford.edu/data/loc-gowalla.html}. The Gowalla dataset has around 6.5 million check-in entries for about 200,000 users around the world over. The data was collected over a period of 20 months. The dataset contains information about time and geo-coordinates of the check-ins. In order to introduce location semantic in the dataset, we have extracted a list of Point-Of-Interests (POI) from \textbf{OpenStreetMap} \footnote{https://www.openstreetmap.org/} and map each check-in entry to the nearest POI based on geo-coordinates. The mapping is conducted as the following. The area of the check-in entry is increased to a circle of around 111 $m$ in radius by truncating the values of the latitude and longitude to 3 decimal digits. The candidate POIs are those which have the same values of the first 3 decimal digits. The POI with the smallest value in the 4th decimal digits of latitude and longitude is used. Each entry in the mapped dataset contains the time of check-in, the type of POI, and user ID. For the experiments in this paper, a subset of the mapped dataset that belongs to the US Midwest region is used. For this region of 12 US states, the overall size is about 250,000 entries.

\subsubsection{Definition of Sensitive Properties}

Two discrete random variables in the dataset is defined; $X$ and $Y$. Let $X$ be a discrete random variable that indicates the type of location of the check-in entry, and $Y$ a discrete random variable that indicates the time slot. The probability $Pr(X=x)$ (or $P_X(x)$) is the probability of check-in at a location of type x. The probability computed as the following:
\begin{equation*}
P_X(x) = \frac{\text{\# of check-ins in locations of type x}}{\text{Total number of check-ins}}
\end{equation*}

Similarly, $Pr(Y=y)$ (or $P_Y(y)$) is the probability of check-in during time slot y. The probability is computed as the following:
\begin{equation*}
P_Y(y) = \frac{\text{\# of check-ins during slot y}}{\text{Total number of check-ins}}
\end{equation*}
The joint probability $Pr(X=x,Y=y)$ (or $P_{X,Y}(x,y)$) is the probability of check-in at a locations of type x during time slot y. The joint probability is computed as the following: 
{\small
\begin{equation*}
P_{X,Y}(x,y) = \frac{\text{\# of check-ins in locations of type x during slot y}}{\text{Total number of check-ins}}
\end{equation*}
}
Note, that there are many ways to define the random variables $X$, $Y$, and their joint pmf. The above definition is used in order to study the correlation between the random variables representing the location and time of check-ins.

In our experiments, we categorize POIs into 15 categories and time into 4 slots. Tables \ref{tab:loc} and \ref{tab:time} summarize the discrete values of $X$ and $Y$. For the dataset associated with the Midwest region, Figure \ref{fig:global} depicts the joint pmf $p(x,y)$ for the random variables  $X$ and $Y$ taking values according to Table \ref{tab:loc} and Table \ref{tab:time}, respectively. To exemplify the effect of RBAC policy on the pmf, suppose there are two roles, $R_1$ and $R_2$, with dataset $D_1$ and $D_2$, respectively. The size of $D_1$ is 30400, while the size of $D_2$ is 138276. Note, $D_1$ and $D_2$ can be overlapping, i.e., can share some check-in entries. Figures \ref{fig:role1} and \ref{fig:role2} depict the pmfs of $D_1$ and $D_2$, respectively. For example, the probability of check-in in restaurants between 5 PM and 8 PM is 0.079, 0.098, and 0.083, for the entire dataset,  $D_1$, and  $D_2$, respectively.

\subsubsection{Monotonicity of the Sensitive Properties}

Figure \ref{fig:mono} show the relationship between the size and the embedded sensitive property of the check-in dataset. In this experiment, we vary the size of dataset between 10\% to 100\% with an increase of 10\% and choose check-in entries randomly. The sensitive property of the subset is computed. The experiment is repeated 10 times and the average values of divergence and mutual information are reported. We note that the divergence and mutual information are monotonically decreasing as the size of the dataset increasing. Note that this behavior might not hold for other dataset. 

\begin{figure}[t!]
    \begin{center}
        \subfloat[Divergence]{\label{fig:divMono}
        \includegraphics[width=0.4\textwidth]{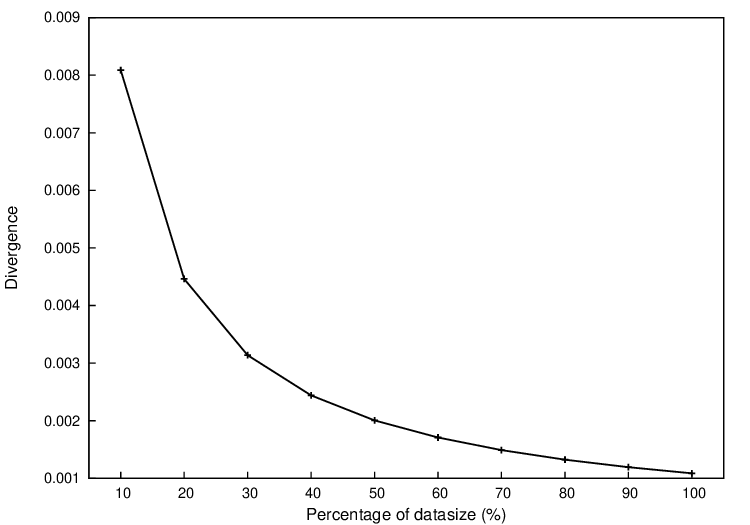}
        }
                \hskip 0.05truein
     \subfloat[Mutual information]{\label{fig:MImono}
        \includegraphics[width=0.4\textwidth]{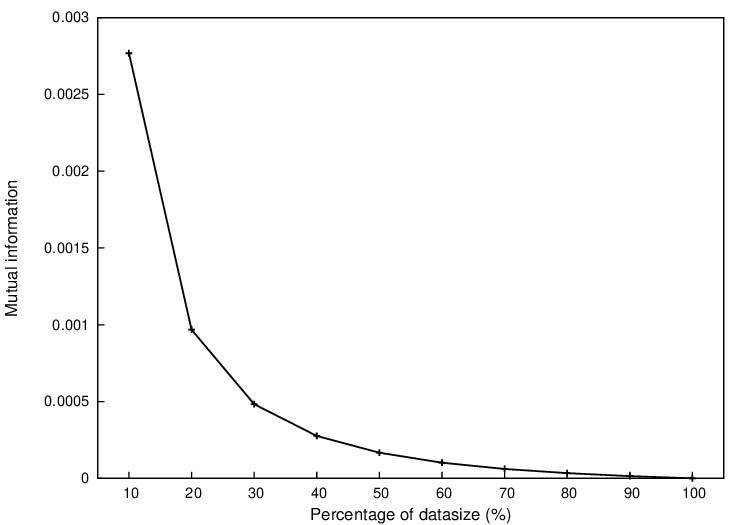}

        }
        \caption{Monotonicity of the divergence and mutual information in check-in dataset.}
\label{fig:mono}
       \end{center}         
\end{figure} 

\subsection{Evaluation Metrics}
In this section, we introduce three evaluation metrics to compare the performance of the proposed heuristics. The first two metrics evalaute with the overall disclosure risk exhibited by all roles. The third metric focuses on the disclosure risk incurred per role. The first metric is the \textit{disclosure risk}, i.e., $Risk$ Equation \ref{eq:cost}. This metric depends on the sensitive property function used in the cost function. The risk metric associated with the $KLD$-based sensitive property measures the risk incurred due to the reduction of the divergence between the attacker's pmf and the global pmf. On the other hand, the risk metric corresponding to the $MI$-based sensitive property measures the risk incurred due to the reduction in the difference between the estimated $MI$ by a role and the global $MI$. 

The second measure is the \emph{quality of risk-reduction} $(\Delta)$ that measures the ability of the proposes heuristics to reduce the disclosure risk by assigning roles to virtual resources. In order to compute $\Delta$, we define the property attackability ($PA$) as the maximum disclosure risk of the sensitive property for a given policy. Formally, $PA$ is given as the following: 
\begin{equation}
PA = \sum_{r_i \in R} f(r_i)
\end{equation}
 Accordingly, the quality of risk-reduction is the relative difference between $PA$ and the disclosure risk of the sensitive properties, i.e., $KLD$-based sensitive property or $MI$-based sensitive property. Formally, $\Delta$ is given as the following:
\begin{equation}
\label{eq:QoR}
\Delta = \frac{PA - Risk}{PA}
\end{equation}
Note, $ 1 \le \Delta \le 0$.

The third metric is the \textit{discriminator index} ($DI$)  which evaluates the performance of the heuristics at the role level. In particular, $DI$ is an indirect indicator of the performance in terms of risk reduction per role. Semantically, $DI$ not only indicates the quality of reducing the risk at the role level but also implies the disproportionality in terms of risk management among the roles. DI is an extension of the generic discriminator index \cite{jain1984quantitative}, and is given as the following:
\begin{equation}
\label{eq:DI}
 DI = 1- \frac{(\sum_{i=1}^n \Delta_i)^2}{n\times \sum_{i=1}^{n}(\Delta_i)^2} 
\end{equation}
where $\Delta_i$ is given as the following:
\begin{eqnarray*}
\Delta_i &= &\frac{f(r_i) - Risk(r_i)}{f(r_i)}
\end{eqnarray*}
and $Risk(r_i)$ is given in Equation \ref{eq:cost}.
%

\subsection{Experimental Setup}
In this experiment, we simulate up to 6 clusters of virtual resources, i.e., VMs. The size of a each cluster ranges from 1 to 32 VMs. A cluster of VMs can be hosted by one or more physical servers. However, in our experiment, we consider one physical server per virtual cluster. The inter-vulnerabilities across virtual clusters are zeros, thus reflecting  physical isolation. The intra-vulnerabilities among VMs within a cluster are randomly generated. All intra-VM vulnerabilities are non-zeros and thus higher than inter-VM vulnerabilities. This is correct because the fact that, in general, physical isolation is more secure than VM isolation. Two sensitive properties functions, i.e., $KLD$-based and $MI$-based, are used for the performance evaluation. Finally, we use the RBAC policy sensitive property profile truncated up to level 3.

\subsection{Performance Results for $KLD$-based Sensitive Property}
In this section, we discuss the performance of the proposed assignment heuristics for a $KLD$-based sensitive property with respect to the aforementioned performance metrics i.e. $Risk$, $\Delta$, and $DI$. 

\subsubsection{Total Risk ($RISK_{KLD}$)}
For both proposed heuristics, we observe that $RISK_{KLD}$ increases as we move from \textit{LSD} to \textit{HSD} case as shown in Figure \ref{fig:30vRiskDiv}. In other words, for the same value of divergence risk, the number of roles in $LSD$ required to obtain the risk value is more than the number of roles in $HSD$ as shown with the dotted horizontal lines in Figure \ref{fig:30vRiskDiv}. The reason can be explained using the monotonicity property shown in Figure \ref{fig:mono}. In particular, a role in $HSD$ workload gain more information by attacking other roles in compared to an $LSD$ workload. This is because that roles in $HSD$ workload share less data compared to $LSD$ workload and thus the data size increases. In general, we notice that $TDH$ outperforms $NBH$ since $TDH$ assigns role clusters based on aggregated risk  in contrast to $NBH$ which assigns each role independently. Figure \ref{fig:30vRiskDiv} shows that as we increase the number of roles, the resulting risk $RISK_{KLD}$ increases. This is expected as the total risk is the sum of  the risk from all the roles. 

However, $RISK_{KLD}$  decreases as the number of VMs used in the experiment increases as depicted in Figure \ref{fig:150rRiskDiv}. This behavior is expected as increasing the number of available resources, i.e. VMs, reduces the number of roles assigned to the same VM which reduces the intra VM attack. Also, we can notice for Figure \ref{fig:150rRiskDiv} that for the case of $HSD$ the decrease in $RISK_{KLD}$ is more  prominent than for the case of $LSD$. The reason being the parameter $PA$ has a higher value for $HSD$ due to low sharing of data among the roles. Therefore, increasing the number of VMs is more beneficial for the case of $HSD$ than for  the case of $LSD$.

 \begin{figure}[t!]
    \begin{center}
        \includegraphics[width=0.4\textwidth]{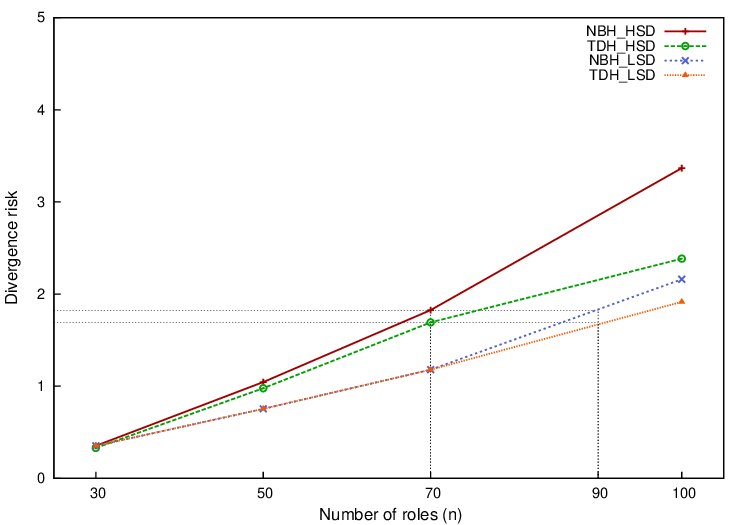}
        \caption{Total Divergence Risk $(RISK_{KLD})$ with a problem size of 30 VMs  for LSD and HSD datacenters.} 
        \label{fig:30vRiskDiv}     
             \end{center}          
                \vspace{-1.0em}
\end{figure} 

\begin{figure}[t!]
    \begin{center}
        \includegraphics[width=0.4\textwidth]{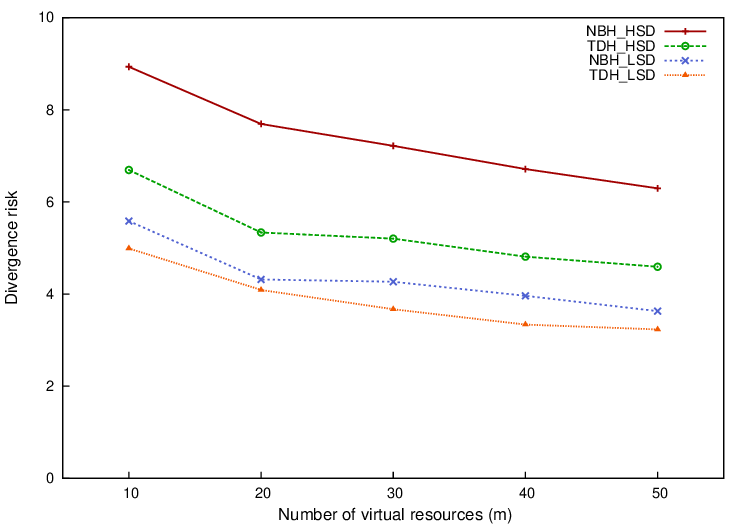}
        \caption{Total Divergence Risk $(RISK_{KLD})$ with a problem size of 150 roles for LSD and HSD datacenters.} 
        \label{fig:150rRiskDiv}
     \end{center}         
             \vspace{-1.0em}
\end{figure}


\subsubsection{Quality of Risk-Reduction($\Delta$)}
For a given heuristic, $\Delta$  decreases with $n$ as depicted in Figure \ref{fig:30vQoRDiv}. The reason being that by increasing the value of $n$, $PA$ of the RBAC policy increases relatively at a faster rate than the rate at which  $RISK_{KLD}$ increases. In addition, as shown in Figure \ref{fig:30vQoRDiv}, $TDH$  outperforms  $NBH$ for both LSD and HSD cases since  $TDH$ extensively searches for better assignments to improve  $RISK_{KLD}$. However in the case of HSD since the size of roles in term of  data items is smaller as compared to LSD, "partial sensitive property" possessed by a role is relatively small for $HSD$.  Therefore it is expected that the value of $\Delta$ is relatively low for HSD as compared to LSD; a phenomenon clearly observable from Figure \ref{fig:30vQoRDiv}.

In Figure \ref{fig:150rQoRDiv}, we observe that for given RBAC policy the increase in the number of VMs improves  $\Delta$ due to reduction in  $RISK_{KLD}$  while the value of parameter $PA$ for the policy does not change.   The behavior of the curves in terms of HSD and LSD in the Figure \ref{fig:150rQoRDiv} can be reasoned similar to reasons given for the Figure \ref{fig:30vQoRDiv}.

 \begin{figure}[t!]
    \begin{center}
        \includegraphics[width=0.4\textwidth]{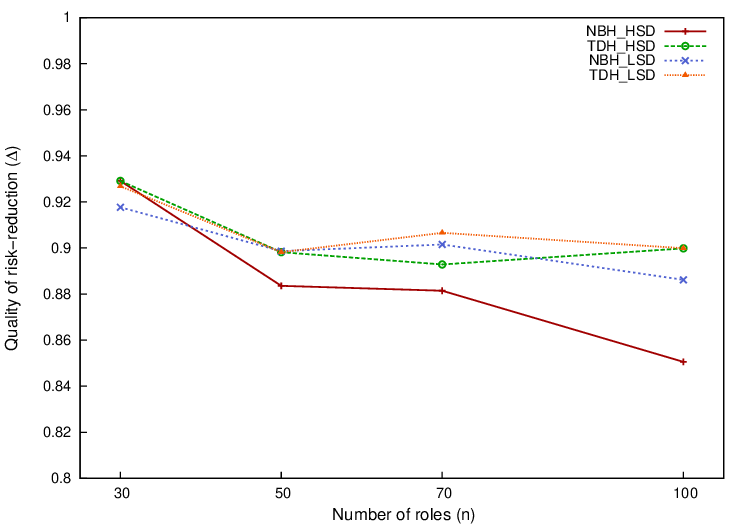}
        \caption{Quality of Risk-Reduction  $\Delta$ with a problem size of 30 VMs  for LSD, and HSD datacenters.} 
        \label{fig:30vQoRDiv}
        \vspace{-1.0em}
     \end{center}         
\end{figure} 

\begin{figure}[t!]
    \begin{center}
        \includegraphics[width=0.4\textwidth]{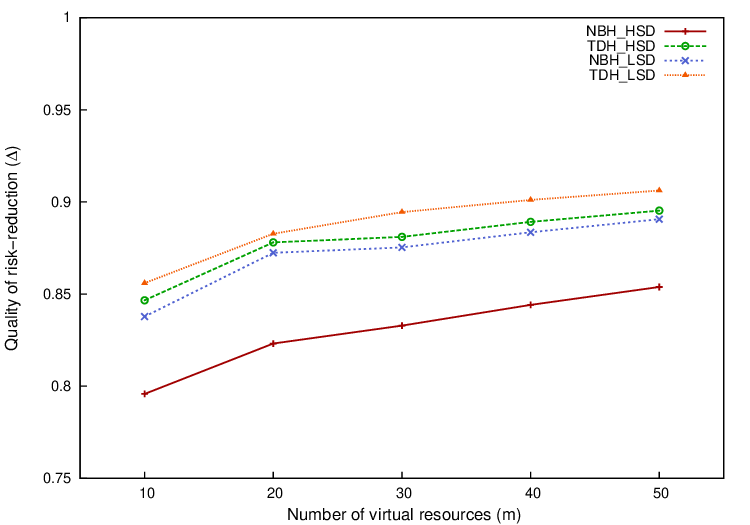}
        \caption{Quality of Risk-Reduction  $\Delta$ with a problem size of 150 roles  for LSD and HSD datacenters.} 
        \label{fig:150rQoRDiv}
        \vspace{-1.0em}
     \end{center}         
\end{figure} 

\subsubsection{Discrimination Index ($DI$)}
The $DI$ metric captures the behavior of both  heuristics  in terms of  $RISK(r_i)$. Intuitively, a low value of $DI$ indicates that the $RISK(r_i)$ is {\color{red}proportional} to $PA_i$.  On other hand, a high value of $DI$ indicates that the resulting $RISK(r_i)$ is {\color{red}disproportional} to $PA_i$ and, in essence, provides  an indication about  bad scheduling decisions.   Figure \ref{fig:30vDIDiv} depicts the performance of both heuristics for the cases of LSD and HSD as we increase the number of roles. Both heuristics tend to perform well in terms of $DI$. The reason being that increasing the number of roles tend to disperse the global sensitive property over large number of rules. As a consequence, the $PA_i$s tend to be close to each other, thus reducing the discrimination effect. In particular, we  note  that $DI$  for both heuristics improves (decreases) and eventually stays steady with no further improvement. A noticeable observation is that for $HSD$,  $DI$ always yields better results than $LSD$ since for small number of roles in $HSD$, $PA_i$'s have high values which making it  easier to generate  good scheduling decision in the initial assignment for the case of NBH as it uses a greedy strategy.  In fact,  for the case of HSD such "good decisions" are consistently made by $NBH$ with increasing value of $n$ as noticed by the green curve for 30 roles in Figure \ref{fig:30vDIDiv}. Similar reasoning can be provided for the behavior of TDH algorithm as shown in Figure \ref{fig:30vDIDiv}.

As we increase the number of VMs for both $LSD$ and $HSD$ datacenters,   $NBH$ tends to improve its $DI$ performance as depicted in Figure \ref{fig:150rDIDiv}., in contrast to $TDH$ where $DI$ is not  effected by the number of VMs. The reason is that  $NBH$ schedules roles using  a greedy approach  based on the $PA_i$ related to the roles. Therefore, increasing the number of VMs alway improves  $DI$ based performance of $NBH$.  In contrast, $TDH$  makes its  scheduling decisions using a more extensive search to generate a solution and  does not consider $PA_i$ during its assignment step.

\subsection{Performance Results for Mutual Information $MI$}

The performance of  of $NBH$ and $TDH$ for the sensitive property  $MI$ in terms of  $RISK_{MI}$,  $\Delta$, and $DI$ parameters is similar to the performance observed of the  $Div$ property. For example \textit{TDH} consistently outperforms $NBH$ as the number of roles and the number of VMs increase as shown in  Figures~\ref{fig:20vRiskMI},~\ref{fig:70rRiskMI}, and~\ref{fig:20vQoRMI}.  However,  we observe a higher value of $\Delta$ for $Div$ property (see Figure \ref{fig:20vQoRDiv}) as compared to value of $\Delta$ yields by $MI$ property. The main reason for  this difference is  the truncated  (up to level 3) sensitive property profile of RBAC policy  used for the evaluation of the proposed heuristics.

\begin{figure}[t!]
    \begin{center}
        \includegraphics[width=0.4\textwidth]{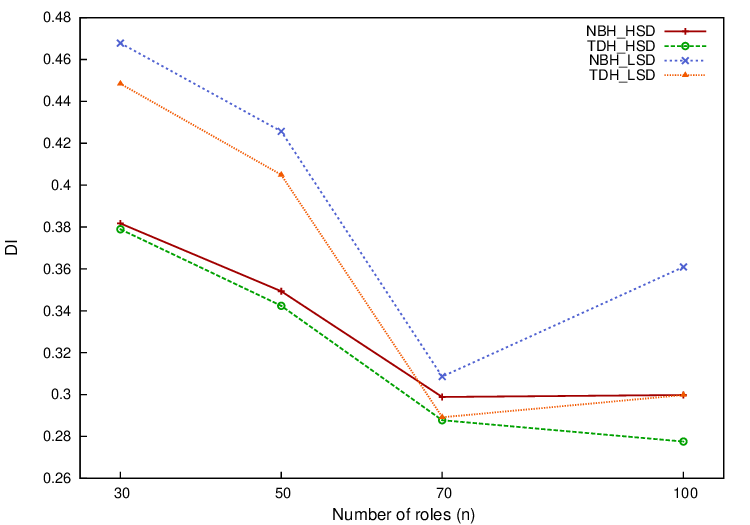}
        \caption{Discrimination Index  $DI$ with a problem size of 30 VMs  for LSD and  HSD datacenters.} 
        \label{fig:30vDIDiv}
        \vspace{-1.0em}
     \end{center}         
\end{figure} 

\begin{figure}[t!]
    \begin{center}
        \includegraphics[width=0.4\textwidth]{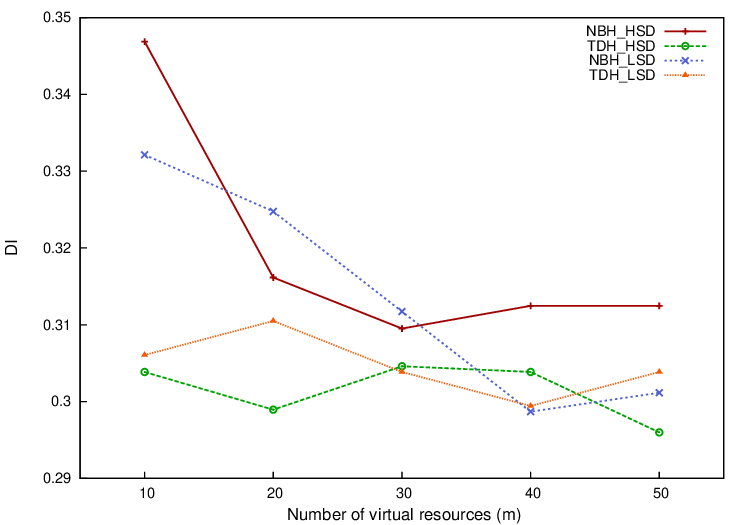}
        \caption{Discrimination Index  $DI$ with a problem size of 150 roles  for LSD and  HSD datacenters.} 
        \label{fig:150rDIDiv}
        \vspace{-2.0em}
     \end{center}         
\end{figure} 

\begin{figure}[t!]
    \begin{center}
        \includegraphics[width=0.4\textwidth]{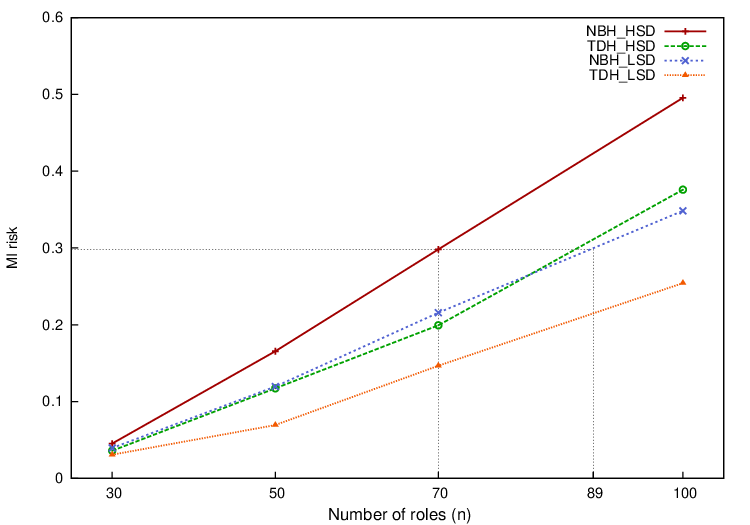}

        \caption{Mutual Information Risk $(RISK_{MI})$ with a problem size of 30 VMs  for LSD and HSD datacenters.} 
        \label{fig:20vRiskMI}
        \vspace{-1.0em}
     \end{center}         
\end{figure} 

\begin{figure}[t!]
    \begin{center}
        \includegraphics[width=0.4\textwidth]{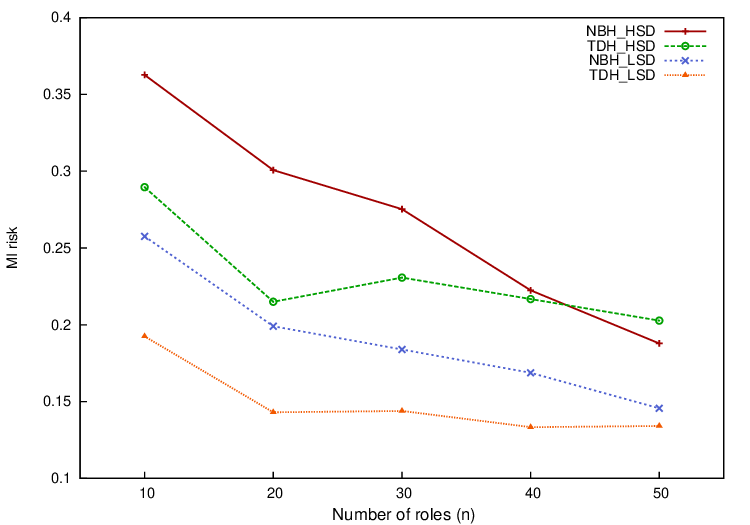}

        \caption{Mutual Information Risk $(RISK_{MI})$ with a problem size of 150 roles for LSD and HSD datacenters.} 
        \label{fig:70rRiskMI}
        \vspace{-2.0em}
     \end{center}         
\end{figure} 

\begin{figure}[t!]
    \begin{center}
        \includegraphics[width=0.4\textwidth]{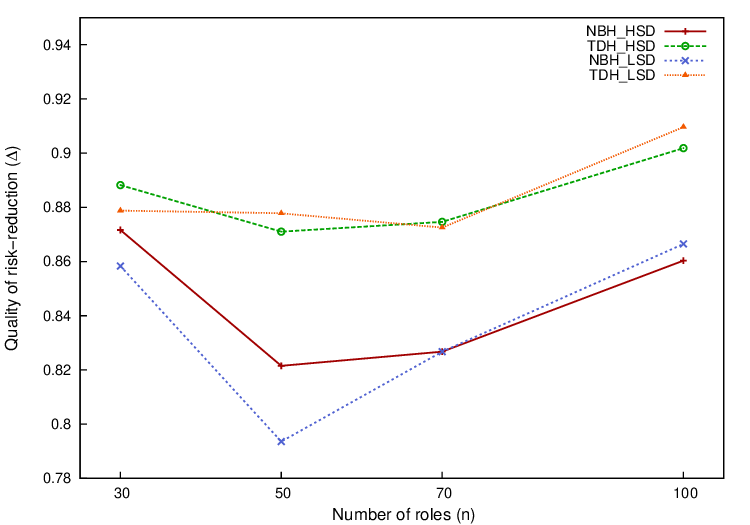}

        \caption{Quality of Risk-Reduction  $\Delta$ with a problem size of 30 VMs  for LSD, and HSD datacenters.} 
        \label{fig:20vQoRMI}
        \vspace{-2.0em}
     \end{center}         
\end{figure} 

 \begin{figure}[t!]
    \begin{center}
        \includegraphics[width=0.4\textwidth]{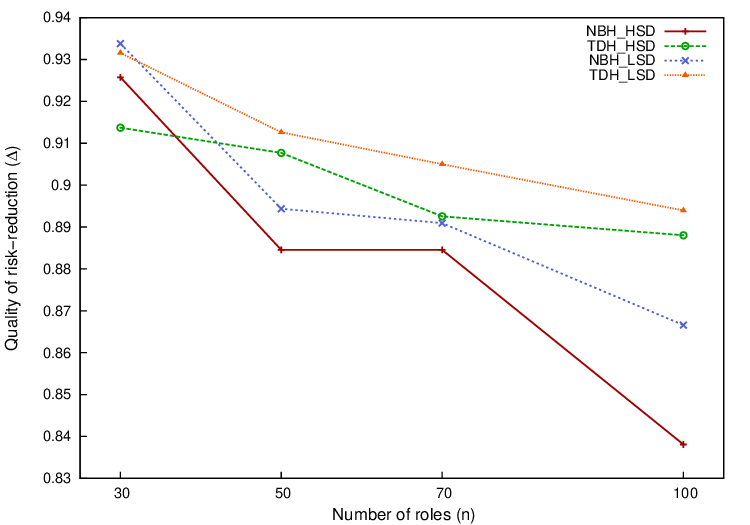}

        \caption{Quality of Risk-Reduction  $\Delta$ with a problem size of 30 VMs  for LSD, and HSD datacenters.} 
        \label{fig:20vQoRDiv}
        \vspace{-2.0em}
     \end{center}         
\end{figure}  

\begin{figure}[h!]
    \begin{center}
        \includegraphics[width=0.4\textwidth]{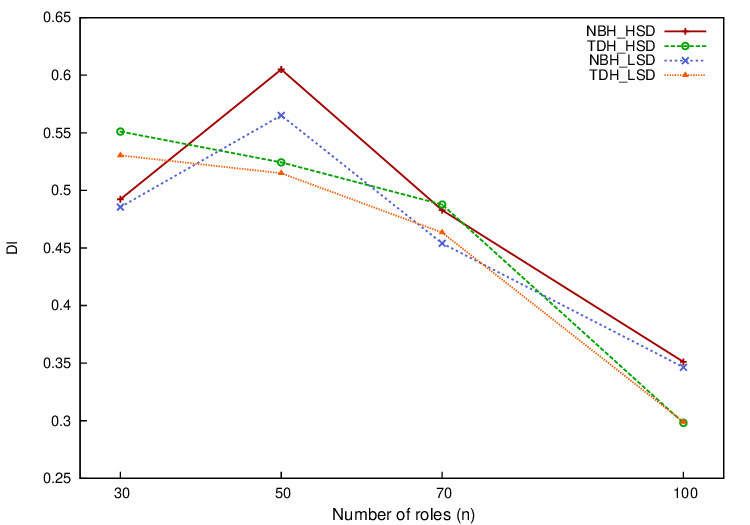}

        \caption{Discrimination Index  $DI$ with a problem size of 30 VMs  for LSD and  HSD datacenters.} 
        \label{fig:20vDIMI}
        \vspace{-2.0em}
     \end{center}         
\end{figure} 

In Figure \ref{fig:20vDIMI}, we note that irrespective of the heuristic, the value of $DI$ for the case of \textit{HSD} is higher when $n=50$ as compared to $n=30$ and $n=70$. To elaborate on this behavior, we show the $PA_i$  histograms for all roles in these three  cases of RBAC policy ($n=30$, $n=50$, and $n=70$)  for HSD as well as for LSD. Figures \ref{fig:LSDAttackMI},  and \ref{fig:HSDAttackMI}  provide these histograms. We sort the roles based on their $PA_i$ from large to small. The behavior of $DI$ curve in Figure 6.17 can be explained within the context of $PA_i$'s as following. The histogram for HSD for the case of n=50 shows a sharp drop of $PA_i$'s and a "piece-wise uniform" clusters of $PA_i$'s.  If two roles belong to the same  uniform cluster  are assigned to different type of VMs (in terms of vulnerability), it is expected  to increase the $DI$ metric.  As noted from  Figure 6.19 the uniform clustering phenomenon is more pronounced for the case of $n=50$; it is less pronounced for the case of $n=30$; and it is negligible for the case $n=70$. This results in the convex shaped behavior of the $DI$ graph of the HSD case. Similarly, for the case of LSD, as can be noticed from Figure 6.18, the uniform cluster behavior $PA_i$   is more noticeable  for smaller values of $n$ ( n=30 and n=50) and is negligible for $n=70$.

\begin{figure}[t!]
    \begin{center}
        \subfloat[n=30, LSD datacenter]{\label{fig:30rLSDPA}
        \includegraphics[width=0.4\textwidth]{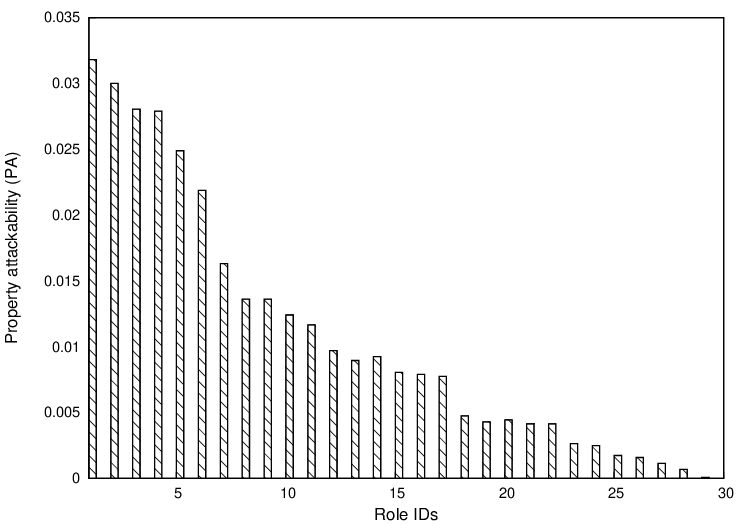}
        }
        \hskip 0.05truein

     \subfloat[n=50, LSD datacenter]{\label{fig:50rLSDPA}
        \includegraphics[width=0.4\textwidth]{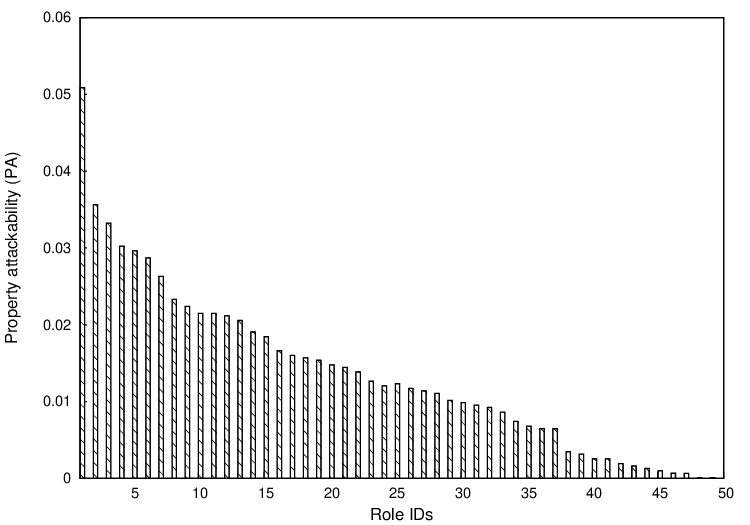}

        }
        \hskip 0.05truein
            \subfloat[n=70, LSD datacenter]{\label{fig:70rLSDPA}
        \includegraphics[width=0.4\textwidth]{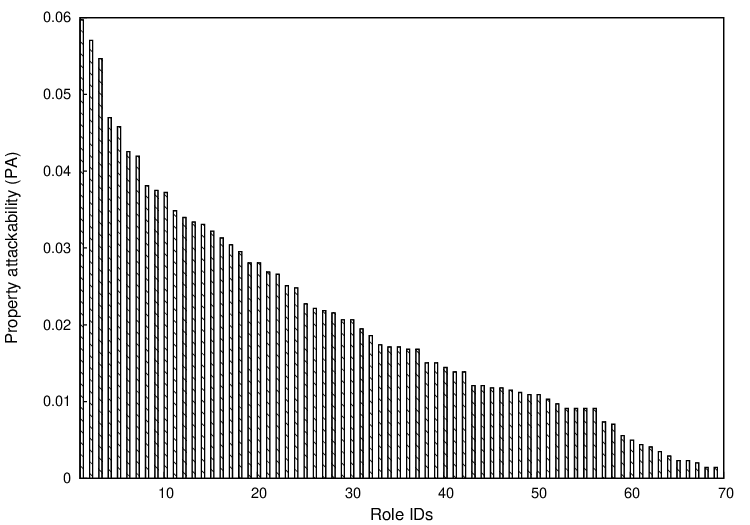}
        }
        \vspace{-1.0em}
        \caption{Attackability per role vs. $DI$ for LSD datacenter with mutual information sensitive property.}
\label{fig:LSDAttackMI}        
        \vspace{-2.0em}
       \end{center}         
\end{figure} 

\begin{figure}[h!]
    \begin{center}
        \subfloat[n=30, HSD datacenter]{\label{fig:30rHSDPA}
        \includegraphics[width=0.4\textwidth]{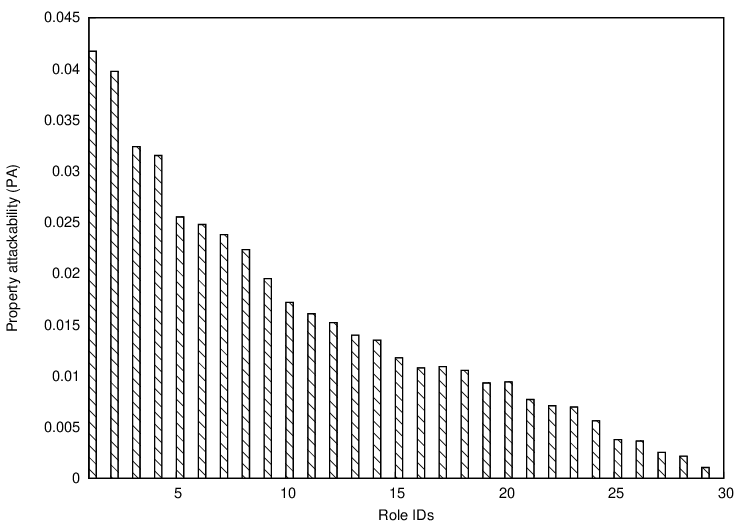}
        }
        \hskip 0.05truein

     \subfloat[n=50, HSD datacenter]{\label{fig:50rHSDPA}
        \includegraphics[width=0.4\textwidth]{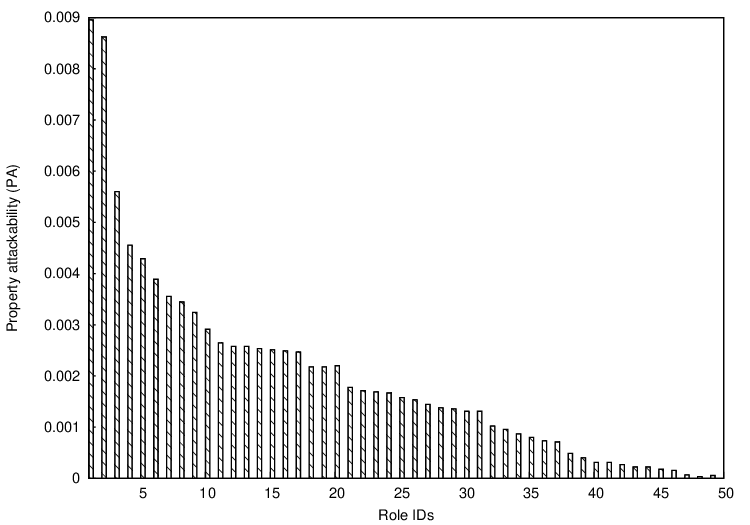}

        }
        \hskip 0.05truein
            \subfloat[n=70, HSD datacenter]{\label{fig:70rHSDPA}
        \includegraphics[width=0.4\textwidth]{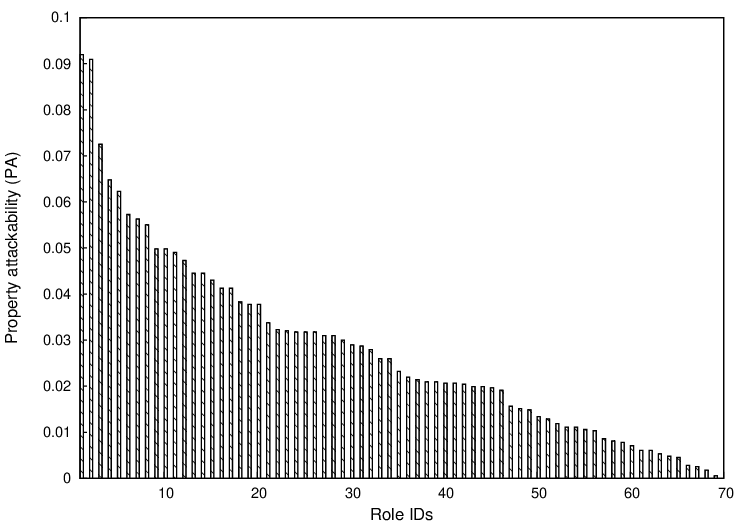}

        }

        \vspace{-1.0em}
        \caption{Attackability per role vs. $DI$ for HSD datacenter with mutual information sensitive property.}
\label{fig:HSDAttackMI}        
        \vspace{-2.0em}
       \end{center}         
\end{figure} 

\section{Related Work}\label{sec:related}
Work related to this paper can be categorized into three categories: 1) access control in cloud datacenter, 2) vulnerability models in cloud datacenter, and 3) security-aware resource assignment in cloud datacenter.

First, several researchers have addressed access control issues for cloud computing. Nurmi et al. \cite{nurmi2009eucalyptus} provided an authorization system to control the execution of VMs to ensure that only administrators and owners could access them. Berger et al. \cite{berger2009security} promoted an authorization model based on both RBAC and security labels to control access to shared data, VMs, and network resources. Calero et al. \cite{alcaraz10} presented a centralized authorization system that provides a federated path-based access control mechanism. What distinguishes our work is that we address the problems of virtual resource vulnerability in the presence of multitenancy and virtualization. 

Second, vulnerability discovery models (VDMs) are used to estimate the probability of software vulnerability. The authors in \cite{manadhata2011attack} proposed an attack surface metric that measures the security of software by analyzing its source code and find the potential flows. In \cite{alhazmi2007measuring} the vulnerability discovery is modeled based on the empirical analysis of history of actual vulnerability dataset and followed by predicting the vulnerability discovery. A Static-analysis of vulnerability indicator tool to assess the risk of software built by external deleopers is proposed in \cite{walden2012savi}, while \cite{shar2015web} proposes a hybrid (static and dynamic) analysis and machine learning techniques to extract vulnerabilities in web applications.  In \cite{massacci2014empirical}, an empirical methodology to systematically evaluate the performance of VDMs is proposed. The source code quality and complexity are incorporated in the VDM to generate better estimation \cite{rahimi2013vulnerability}. Common Vulnerability Scoring System (CVSS) \cite{mell2007complete} metrics have been extensively used by both industry and academia to score the vulnerabilities of the systems and estimate risks. The  US National Vulnerability Database (NVD) provides a catalog of known vulnerabilities with their CVSS scores. 

Finally, various security-aware scheduling techniques for cloud computing have been reported. In \cite{wen2016cost}, an algorithm to deploy workflow application over federated cloud is proposed. The proposed algorithm guarantees reliability and security constraints while optimizing the monetary cost. In particular, a workflow application is distributed on two or more cloud platforms in order to benefit from the advantage of each cloud. In order to prevent from unauthorized information disclosure of the deployed workflow over the federated cloud, a multi-level security model is used. In \cite{chen2017scheduling}, a task-scheduling technique for security sensitive workflows is proposed. The proposed technique addresses the problem of scheduling tasks with data dependencies, i.e., sensitive intermediate data. Authors presented theoretical guidelines for duplicating workload tasks in order to minimize the makespans and monetary costs of executing the workflows in the cloud. In \cite{chase2017scalable}, authors presented an approach for optimal provisioning of resources in the cloud using stochastic optimization. The provisioning technique takes into account future pricing, incoming traffic, and cyber attacks. The above work do not consider the data confidentiality of application or the access control policy, which restricts permissions for a given task. A key feature that differentiates our scheduling algorithms from other approaches is that the privilege set for each task is the main parameter in the scheduling decision. The objective of our scheduling algorithm is to minimize the risk of data leakage due to vulnerability of virtual resources in cloud computing environments.


%
%
%
%

\section{Conclusion}\label{sec:conclusion}
In this paper, we have proposed a sensitive property profile model for an RBAC policy. For this propose we have used two properties based on KL-divergence and mutual information extracted from check-in dataset. Based on the vulnerabilities of cloud architecture and the sensitive property profile, we have proposed resource scheduling  problem based on the  optimization pertaining to risk management. The problem is shown to be NP-complete. Accordingly, we have proposed two heuristics and presented their  simulation based performance results for $HSD$ and $LSD$ datacenters.
\bibliographystyle{IEEEtran}
\bibliography{references}
\end{document}